\documentclass[runningheads]{llncs} %
\usepackage{graphicx} 

\usepackage[utf8]{inputenc}
\usepackage{amssymb}
\usepackage{threeparttable}
\usepackage{graphicx}
\usepackage{color}
\usepackage{paralist}
\usepackage{framed}
\usepackage{caption}
\captionsetup{compatibility=false}
\usepackage{todonotes}
\usepackage{subfig}
\usepackage{multirow}
\usepackage{amsmath}
\usepackage{hyperref}

\usepackage[shortlabels]{enumitem}

\usepackage{algorithmic}
\usepackage[ruled,vlined]{algorithm2e}
\usepackage{amsfonts}


\SetKw{continue}{continue}

\usepackage{tikz}
\usetikzlibrary{fit,calc,positioning,decorations.pathreplacing,matrix}

\definecolor{pdfurlcolor}{rgb}{0,0,0.6}
\definecolor{pdfcitecolor}{rgb}{0,0.6,0}
\definecolor{pdflinkcolor}{rgb}{0.6,0,0}
\usepackage{epsfig,color}

\makeatletter
\newcommand{\pushright}[1]{\ifmeasuring@#1\else\omit\hfill$\displaystyle#1$\fi\ignorespaces}
\newcommand{\pushleft}[1]{\ifmeasuring@#1\else\omit$\displaystyle#1$\hfill\fi\ignorespaces}
\makeatother

\newcommand{\scomm}{\ensuremath{{\cal S}_{\text{COMM}}}}
\newcommand{\scomp}{\ensuremath{{\cal S}_{\text{COMP}}}}

\newcommand{\threepart}{\textsc{3Par}\xspace}


\tikzset{xtick/.style={inner xsep=0pt, inner ysep=3pt, minimum
		size=0pt, draw, label=below:#1},%
	comm/.style args={#1start#2length#3color#4}{rounded corners=1mm, draw, inner
		sep=0pt, fill=#4, label=center:#1, fit={(#2,0.75)
			(#2+#3,1.5)}},%
	comp/.style args={#1start#2length#3color#4}{rounded corners=1mm, draw, inner
		sep=0pt, fill=#4, label=center:#1, fit={(#2,0)
			(#2+#3,0.75)}},%
}

\newcommand{\schedule}[3]{
	\draw[->] (-0.2, 0) -- (#1, 0) node[below] {$t$};
	\draw (0, 0) -- (0, 1.5);
	\node at (-0.8, 0.75)[rotate=90] {#2};
	\draw[dashed,gray] (0, 0.75) -- (#1, 0.75);
	\foreach \t in {0,#3} {
		\node[xtick=\t] at (\t, 0){};
	}
}
\newcommand{\task}[6][0]{
	\node[comm=#2 start #3 length #4 color #6]{};
	\node[comp=#2 start #3+#4+#1 length #5 color #6]{}; 
}



\begin{document} %
	\title{Performance Models for Data Transfers: A Case Study with Molecular Chemistry Kernels}
	\titlerunning{Performance Models for Data Transfers}
	%
	\author{Suraj Kumar \inst{1}\and 
		Lionel Eyraud-Dubois \inst{2}\and 
		Sriram Krishnamoorthy \inst{1}} %
	\authorrunning{Kumar and Eyraud-Dubois, et al.} 
	\institute{Pacific Northwest National Laboratory, Richland, Washington, USA \\
		\email{\{suraj.kumar, sriram\}@pnnl.gov}
		\and
		 Inria Bordeaux -- Sud-Ouest, Université de Bordeaux, France\\
		\email{lionel.eyraud-dubois@inria.fr}} %
	\maketitle              
	\begin{abstract} 
		With increasing complexity of hardwares, systems with different memory nodes are ubiquitous in High Performance Computing (HPC). It is paramount to develop strategies to overlap the data transfers between memory nodes with computations in order to exploit the full potential of these systems. In this article, we consider the problem of deciding the order of data transfers between two memory nodes for a set of independent tasks with the objective to minimize the makespan. We prove that with limited memory capacity, obtaining the optimal order of data transfers is a NP-complete problem. We propose several heuristics for this problem and provide details about their favorable situations. We present an analysis of our heuristics on traces, obtained by running 2 molecular chemistry kernels, namely, Hartree–Fock (HF) and Coupled Cluster Single Double (CCSD) on 10 nodes of an HPC system. Our results show that some of our heuristics achieve significant overlap for moderate memory capacities and are very close to the lower bound of makespan.

		\keywords{Communication Scheduling  \and Memory Nodes \and Runtime Systems \and Communication-Computation Overlap \and Molecular Chemistry.}
	\end{abstract} %

\section{Introduction}
\label{sec:intro}

With the advent of multicore, and the use of accelerators, it is notoriously cumbersome to exploit the full capability of a machine. Indeed, there are several challenges that come into picture. First, every architecture provides its own efficacy and interface. Therefore, a steep learning curve is required for programmers to take good utilization of all resources. Second, scheduling is a well known NP-Complete optimization problem, and hybrid and distributed resources make this problem harder (we refer ~\cite{webpagescheduling} for a survey on the complexity of scheduling problems and ~\cite{bleuse2015scheduling} for a recent survey in the case of hybrid nodes). Third, due to shared buses and parallel resources, it is challenging to obtain a precise model based on prediction of computation and communication times. Fourth, the number of architectures has increased drastically in recent years, therefore it is almost impossible to develop hand tuned optimized code for all these architectures. All these observations led to the development of different task based runtime systems. Among several runtimes, we may cite QUARK~\cite{YarKhan:2011:Quark:Manual} and PaRSEC~\cite{parsec} from from ICL, Univ. of Tennessee Knoxville (USA), StarPU~\cite{starpu} form Inria Bordeaux (France), Legion~\cite{legion12} from Stanford Univ. (USA), StarSs~\cite{ompss} from Barcelona Supercomputing Center (Spain), KAAPI~\cite{kaapi} from Inria Grenoble (France). All these runtime systems allow programmers to express their algorithms at the abstract level in the form of direct acyclic graphs (DAG), where vertices represent computations and edges represent dependencies among them. Sometimes some static information is also provided along with the DAG, such as distance to exit (last) node as a priority or affinity of computation towards resources. The runtime is then responsible for managing scheduling of computations and communications, data transfers among different memories, computation-communication overlap, and load balance.


In the last few decades, we have witnessed a drastic improvement in the hardware to provide higher rate of computation, but comparatively smaller improvement has been achieved for the rate of data movement. With extreme scale computing, supercomputers face bottlenecks due to the need of large amount of data~\cite{ascaccommitteereport2014,yelick2016}. Therefore, the HPC community is now focusing on avoiding, hiding and minimizing communication costs.

Certain applications such as dense linear algebra kernels have regular structure. Therefore, it is possible to associate priorities to computations, based on the task graph structure, and to use them at runtime to make the execution efficient. In irregular applications, programmers do not know the precise structure of the task graphs in advance: tasks are added recursively based on certain sentinel constraints. For such applications, the runtime system sees a set of independent tasks and schedules them on different processing units. It is extremely important for runtimes to decide the order of data transfers for these scheduled computations so as to maximize the overlap between computations and communications. This is the main topic of this article. We prove that the order of communications on two memory nodes with the objective of minimizing the makespan is a NP-Complete problem if the memory of the target node is limited. Our proof is inspired from work by~\cite{Papadimitriou:1980:FSL:322203.322213}, which applies a similar technique for 2-machine flowshop problem with bounded capacity. The main difference between both approaches is that they consider all tasks have the same occupation on the second machine and the memory occupation starts when the processing finishes on the first machine. On the contrary, our approach is designed for tasks appearing in scientific workloads whose memory requirements are highly irregular and we consider that memory is acquired before starting the data transfer on the communication resource. We propose different runtime strategies in order to maximize the overlap of computations and communications. We evaluate our strategies on the context of a cluster of homogeneous nodes. However, our approach is generic and easily adaptable to any system which operates on different memory spaces. Here are the important contributions of this article:

\renewcommand{\labelitemi}{$\bullet$}
\begin{itemize}
	\item NP-Completeness proof for the general data-transfer problem 
	\item Proposed different scheduling strategies with the objective to minimize the makespan
	\item Linear programming formulation of the problem
	\item Numerous experiments to assess the effectiveness of our strategies on molecular chemistry kernels 
\end{itemize}

The outline of the article is the following. Section~\ref{sec:relatedWork} describes past work on the computations with limited memory and similar problems in the literature. In section~\ref{sec:theoreticalProof}, we present an algorithm to obtain the order of data transfers when there is not any memory capacity restriction. Then, we also prove that in general data transfer problem is NP-complete. In Section~\ref{sec:heuristics}, we propose several heuristics and describe their favorable scenarios. We mainly consider three categories of heuristics: static heuristics, dynamic heuristics and static heuristics with dynamic corrections. Sections~\ref{sec:expSetting} describes our experimental setup and we evaluate our proposed strategies on two molecular chemistry kernels in Section~\ref{sec:expResults}. Our results show that static heuristics with dynamic corrections achieve good performance in most cases. We finally propose conclusions and perspectives in Section~\ref{sec:conclusion}.

\section{Related Work}
\label{sec:relatedWork}

Historically there has been a great emphasis on the development of parallel algorithms and minimizing the complexity of computations. As the number of computation cores has increased drastically in recent years, supercomputers face bottleneck due to communication required by an application. Hence, in recent years the focus has changed towards developing communication avoiding algorithms, strategies to hiding communications and minimizing the data accessed by applications~\cite{yelick2016}.

The problem of scheduling tasks has been highly studied in the literature and many formulations are known to be NP-Complete~\cite{GareyJohnson}. Many static and dynamic strategies have been proposed and analyzed for scheduling tasks on heterogeneous resources~\cite{heft-Topcuoglu,hipc16multiresource,ipdps16starpu}. There is also a number of studies in the direction of task scheduling with the emphasis on improving locality and minimizing the communication cost~\cite{starpu,heft-Topcuoglu}. Stanisic et. al~\cite{luka-dmdar} proposed a heuristic to schedule tasks on a computational resource where most of its data is available. A similar approach has been adopted by Agullo et. al for the scheduling of sparse linear algebra kernels~\cite{agullo_fmm}. Predari et. al proposed heuristics to partition the task graph across a number of processors such that inter-processor communication can be minimized~\cite{predari:tel-01518956}.

The problem considered in this article also can be viewed as a flow shop problem: the communication link can be seen as a processing resource, and each task needs to first be handled by the communication link and then by the computational resource. Communication and computation times of a task can thus be considered as processing times on different machines. Johnson has provided scheduling strategies for 2 and 3-machine flow shop problems with infinite memory capacity~\cite{johnson}. 2-machine flow shop problem with finite buffer has been proven NP-Complete by Papadimitriou et. al~\cite{Papadimitriou:1980:FSL:322203.322213}, in which a constraint is imposed on the number of tasks that can await execution on the second machine.

A number of other studies have focused on scheduling with limited memory and storage, starting with the work of register allocation for arithmetic expressions by Sethi and Ulman~\cite{Sethi:1970:GOC:321607.321620}. Sarkar et. al worked on the scheduling of graphs of smaller-grain tasks with limited memory, where each task requires homogeneous data size~\cite{vsarkar-pact}. The same work has been extended by Marchal et. al for task graphs where memory requirement of each task is highly irregular~\cite{loris-ipdps18}.

\section{Data Transfer Problem Formulation}
\label{sec:theoreticalProof}

To exploit the full potential of a system it may be necessary to execute tasks on processing
units where all of their data does not reside. A task may require all of its input data
in local memory before starting the computation. There may be multiple tasks
scheduled on a processing unit, which require to transfer data from the same
memory node. Ordering data transfers for such tasks is very crucial for the
communication-computation overlap, thus for the overall
performance. In general, order of task execution with input
and output data transfers can be  
viewed as a 3-machine flowshop problem, where processing time on the first machine is 
input data transfer time, processing time on the second machine is task computation time, 
and processing time on the third machine is output data transfer time; and the objective is 
to minimize the total makespan. This is a well known NP-complete problem~\cite{3machineFlowShopNPComplete}.

In many cases, output data that needs to be retrieved after task execution is much smaller than the input data. It is often the case that future tasks running on the same memory node require output data of the past tasks. Therefore, most runtime systems transfer data to other memory nodes based on the demand -- not immediately after they were produced. It is also possible that all output data can be stored in a preallocated separate buffer on a memory node. Hence, we do not consider output data explicitely in our analysis and assume that output data is negligible or stored in a separate buffer for each task. Thus problem considered here is more similar to a 2-machine flowshop problem. We prove that ordering the execution of such tasks with finite memory capacity is a NP-complete problem: 


\noindent\textbf{Problem $DT$} : A set of tasks $ST=\{T_1,
\cdots, T_n\}$ is scheduled on a processing unit $P$ with
memory unit $M$ of capacity $C$. Input data for tasks of $ST$
reside on another memory unit $M'$. $CM_i$ is the communication time to
transfer input data from $M'$ to $M$ for task $i$ and $CP_i$
is the computation time of task $i$ on $P$. We assume that these
tasks do not produce any output data. There can be only one
communication at a time, and $P$ can only process one task at
a time. A task uses an amount of memory in  $M$ from the
start of its communication to the end of its computation.

\noindent Given $L$, is there a feasible schedule $S$ for $ST$ such that
makespan of $S$, $\mu(S) \le L$?

Given a schedule, \scomm($i$) and \scomp($i$) represent the
start times of task $i$ on communication and computation
resources. A schedule is feasible if for every time $t$, the
amount of memory required by all tasks such that $\scomm(i) \leq t
\leq \scomp(i) + CP_i$ is not more than the memory capacity
$C$. 
For simplicity, we assume throughout the article that tasks
require memory only to store their input data, and thus that
the amount of memory required by a task is proportional to its
communication time. Without loss of generality, we consider in
all examples of Sections~\ref{sec:theoreticalProof}
and~\ref{sec:heuristics} that the memory requirement of a task
is equal to its communication time.

We call a task $i$ compute intensive if $CP_i \ge CM_i$, and
communication intensive otherwise.

\subsection{Special case: Infinite Memory}

When the computational resource has a very large memory, our problem
becomes a classic 2-machine flowshop problem: communication time is
the processing time on the first machine and computation time is the
processing time on the second machine. Johnson's
algorithm~\cite{johnson} is known to provide an ordering for the tasks
which results in an optimal makespan. This algorithm is rewritten in
Algorithm~\ref{alg:OrderOfExecutionInfinteMemory}.

\begin{algorithm}
	\caption{\label{alg:OrderOfExecutionInfinteMemory}Johnson's~\cite{johnson} algorithm (infinite memory case).}
	\begin{algorithmic}[1]
		\STATE Divide ready tasks in two sets $S_1$ and $S_2$. If computation time of a task $T$ is not less than its communication time, then $T$ is in $S_1$ otherwise in $S_2$.
		\STATE Sort $S_1$ in queue $Q$ by non-decreasing communication times
		\STATE Sort $S_2$ in queue $Q'$ by non-increasing computation times
		\STATE Append $Q'$ to $Q$
		\STATE $\tau_{\text{COMM}} \gets 0$ \hfill\COMMENT{Available time of communication resource}
		\STATE $\tau_{\text{COMP}} \gets 0$\hfill \COMMENT{Available time of computation resource}
		\WHILE{$Q \neq \emptyset$}
		\STATE Remove a task $T$ from beginning of $Q$ for processing
		\STATE $\scomm(T) \gets \tau_{\text{COMM}}$
		\STATE $\scomp(T) \gets max(\scomm(T) + CM_T, \tau_{\text{COMP}})$
		\STATE $\tau_{\text{COMM}} \gets \scomm(T) + CM_T$
		\STATE $\tau_{\text{COMP}} \gets \scomp(T) + CP_T$
		\ENDWHILE
	\end{algorithmic}
\end{algorithm}

We  prove optimality of Algorithm~\ref{alg:OrderOfExecutionInfinteMemory}  differently. Our proof rely on the following lemma.

\begin{lemma}\label{lemma:swappingOfTasks}
	Swapping two contiguous tasks $A$ and $B$ in a schedule does not improve the makespan if one of the conditions is true.
	\begin{enumerate}[label=\roman*)]
		\item  $CP_A \ge CM_A, CP_B \ge CM_B, CM_A \le CM_B$
		\item $CP_A < CM_A, CP_B < CM_B, CP_A \ge CP_B$
		\item $CP_A \ge CM_A, CP_B < CM_B$
	\end{enumerate}
\end{lemma}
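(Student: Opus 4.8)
The plan is to prove the lemma by an adjacent pairwise exchange argument, the classical technique behind Johnson's rule, but organized so that the three stated conditions all follow from a single scalar inequality. Since we are in a permutation schedule (tasks are processed in the same order on the communication and the computation resource), it suffices to analyze the local effect of swapping the block formed by the two contiguous tasks. I would let $X$ denote the time at which the communication resource becomes available to start the first task of the block, and $Y$ the time at which the computation resource becomes available, and then write the completion time of the computation of the \emph{second} task of the block as a function of $X$, $Y$ and the four quantities $CM_A, CP_A, CM_B, CP_B$.

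Unrolling the $\max$/$+$ recurrence gives, for the order $A$ then $B$, a completion time $C_{AB}=\max\bigl(Y+CP_A+CP_B,\ X+a_1,\ X+a_2\bigr)$ with $a_1=CM_A+CP_A+CP_B$ and $a_2=CM_A+CM_B+CP_B$, and symmetrically $C_{BA}=\max\bigl(Y+CP_A+CP_B,\ X+b_1,\ X+b_2\bigr)$ with $b_1=CM_B+CP_A+CP_B$ and $b_2=CM_A+CM_B+CP_A$. The key observation is that the leading term $Y+CP_A+CP_B$ is common to both, and that the communication resource frees at the same instant $X+CM_A+CM_B$ regardless of the order; hence comparing the two orders reduces to comparing $\max(a_1,a_2)$ with $\max(b_1,b_2)$. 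Writing $\max(u,v)=u+v-\min(u,v)$, a one-line computation yields $\max(a_1,a_2)-\max(b_1,b_2)=\min(CM_A,CP_B)-\min(CM_B,CP_A)$, so that $C_{AB}\le C_{BA}$ holds for every $X,Y$ \emph{exactly} when the single Johnson inequality $\min(CM_A,CP_B)\le\min(CM_B,CP_A)$ holds.

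It then remains to verify that each of the three hypotheses implies this inequality, which is a short case analysis. Under (i) both tasks are compute intensive, and $CM_A\le CM_B\le CP_B$ forces $\min(CM_A,CP_B)=CM_A\le\min(CM_B,CP_A)$; under (ii) both are communication intensive, and $CP_B\le CP_A<CM_A$ together with $CP_B<CM_B$ give $\min(CM_A,CP_B)=CP_B\le\min(CM_B,CP_A)$; under (iii) the quantity $\min(CM_A,CP_B)$ is simultaneously at most $CM_A\le CP_A$ and strictly below $CM_B$, hence at most $\min(CP_A,CM_B)$. In each case the scalar inequality holds, so $C_{AB}\le C_{BA}$.

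Finally I would close the argument with a monotonicity remark. Since the downstream portion of the schedule is obtained from the two resource-availability times by compositions of $\max$ and addition, it is nondecreasing in both of those times. The order $A,B$ leaves the communication resource free at the same instant as $B,A$ and the computation resource free no later, so the overall makespan after the swap to $B,A$ cannot be smaller; equivalently, swapping does not improve it. The only delicate points are the careful bookkeeping when unrolling the recurrence, so that no term of the maxima is dropped, and phrasing the monotonicity cleanly when the block is not at the end of the schedule; both are routine once the reduction to $\min(CM_A,CP_B)\le\min(CM_B,CP_A)$ is in place, which is the conceptual heart of the proof.
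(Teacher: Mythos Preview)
Your proof is correct and takes a genuinely different, more conceptual route than the paper. The paper fixes the same starting times (your $X,Y$ are its $t_1,t_2$), expands $\scomp(B)+CP_B$ into the same three-term $\max$, and then proves $\scomp(B)+CP_B\le\scomp'(A)+CP_A$ by three separate chains of inequalities, one per case~(i)--(iii), each time manually absorbing or bounding terms using that case's hypotheses. You instead factor the problem once and for all: after noting that the term $Y+CP_A+CP_B$ and the communication-release time $X+CM_A+CM_B$ are common to both orders, you reduce the comparison to the scalar quantity $\max(a_1,a_2)-\max(b_1,b_2)$ and evaluate it via $\max(u,v)=u+v-\min(u,v)$ to obtain exactly Johnson's criterion $\min(CM_A,CP_B)\le\min(CM_B,CP_A)$. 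The three cases then become a two-line check that each hypothesis implies this single inequality.

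What your approach buys is a unification that makes the lemma's three conditions transparent as sufficient (but not jointly necessary) instances of Johnson's rule, and it avoids the somewhat ad~hoc algebra of the paper's Case~I--III computations. What the paper's approach buys is self-containment: it never names or derives Johnson's criterion explicitly, so a reader unfamiliar with the classical argument can follow each case mechanically. Your closing monotonicity remark (that the downstream schedule is a composition of $\max$ and addition, hence nondecreasing in both release times) is exactly the justification the paper gives in one sentence before its case split, so on that point the two proofs coincide.
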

\begin{proof}
	Let $t_1$ and $t_2$ be the early start time on communication and computation resources just before the task $A$ starts.
	
	\begin{figure}[htb]
		\centering
		\includegraphics[scale=0.35]{./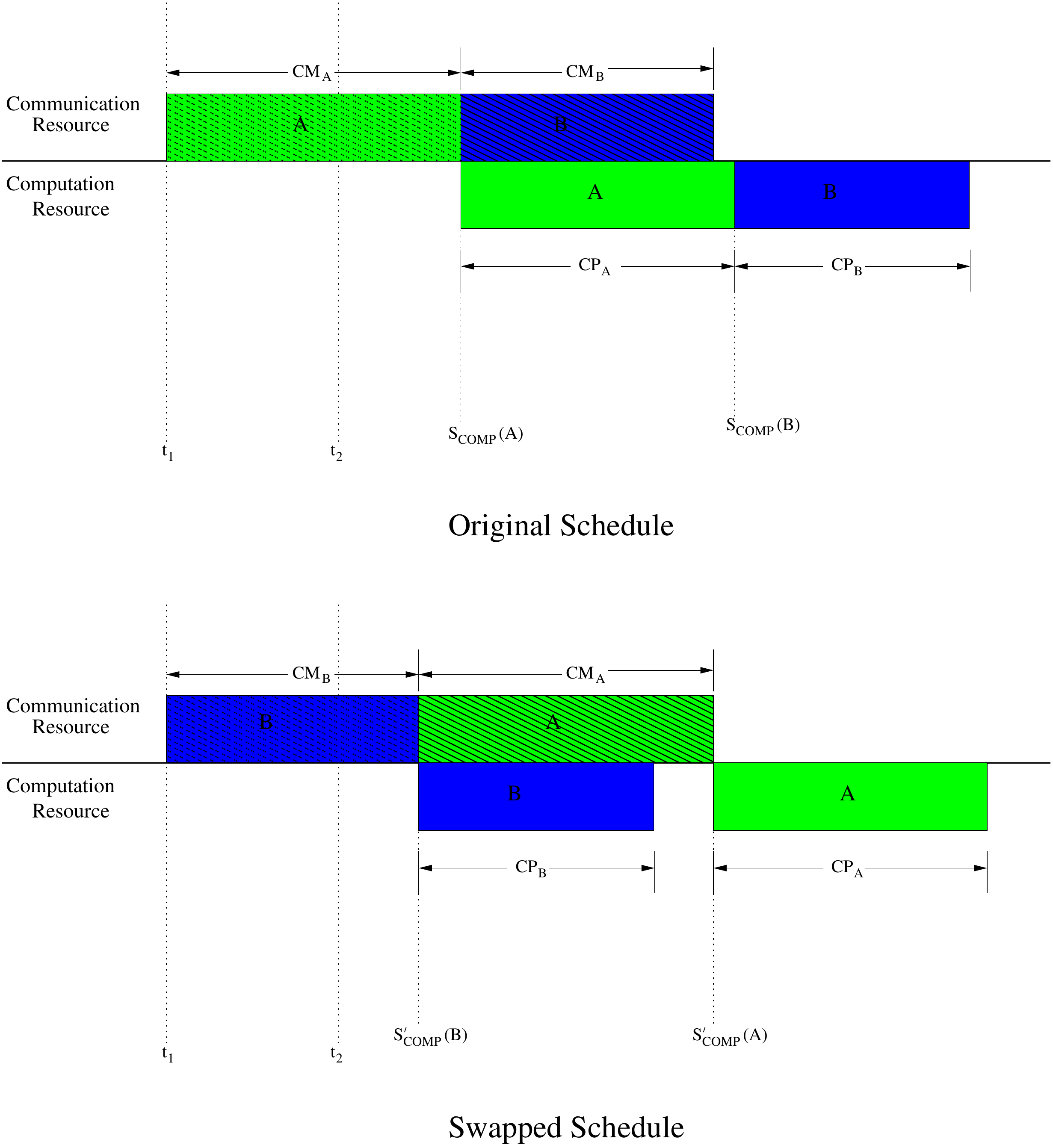}
		\caption{ \label{fig:bothSchedule} Original and Swapped Schedules.}
	\end{figure}
	We write the following constraints based on the two schedules of the Fig~\ref{fig:bothSchedule}.
	
	\noindent $\scomp(A) = max(t_1 + CM_A, t_2)$ \\
	\noindent $\scomp(B) = max(\scomp(A) + CP_A,  t_1 + CM_A + CM_B)$\\
	
	\noindent Completion time of $B$ in original Schedule = $\scomm(B) + CP_B$\\
	\noindent Completion time of $B$ in swapped Schedule,  $\scomp '(B)  =  max(t_1 + CM_B, t_2)$\\
	\noindent Completion time of $A$ in swapped Schedule, $ \scomp '(A)$ \newline
	 \hspace*{0pt}\hfill $ = max(\scomp'(B) + CP_B, t_1 + CM_B + CM_A)$

	\noindent In both schedules, early available time on communication resource after scheduling $A$ and $B$ is same. If we show that early available time on computation resource in swapped schedule after scheduling both tasks is not less than the time of original schedule, then the proof is complete. Hence our goal is to prove that,		
	$$\scomp(B) + CP_B \le \scomp'(A) + CP_A $$
	
{\small
		\begin{align*}
				\scomp(B) + CP_B & =  &max(\scomp(A) + CP_A + CP_B,  t_1 + CM_A + CM_B + CP_B)\\
				& = & max(t_1 + CM_A + CP_A + CP_B, t_2 + CP_A + CP_B, t_1 + CM_A\\
				& &+ CM_B + CP_B) 
		\end{align*}
	Case I: $CP_A \ge CM_A, CP_B \ge CM_B, CM_A \le CM_B$
	\begin{align*}
		\scomp(B) + CP_B & =  max(t_1 + CM_A + CP_A + CP_B, t_2 + CP_A + CP_B, t_1 + CM_A \\
		&\pushright{+ CM_B + CP_B)}\\
		& \le max(t_1 + CM_A + CP_B, t_2  + CP_B, t_1  + CM_B + CP_B) + CP_A\\
		& =  max( t_1  + CM_B + CP_B, t_2  + CP_B) + CP_A \\
		& =  max(max(t_1 + CM_B, t_2) + CP_B, t_1 + CM_A + CM_B) + CP_A\\
		& =  max(\scomp'(B) + CP_B, t_1 + CM_A + CM_B) + CP_A \\
		& =  \scomp'(A) + CP_A
	\end{align*}
	Case II: $CP_A < CM_A, CP_B < CM_B, CP_A \ge CP_B$
	\begin{align*}
		\scomp(B) + CP_B & =  max(t_1 + CM_A + CP_A + CP_B, t_2 + CP_A + CP_B, t_1 + CM_A \\
		& \pushright{+ CM_B + CP_B)}\\
		& \le  max(t_1 + CM_A + CP_B, t_2+CP_B, t_1 + CM_A + CM_B) + CP_A\\
		& =  max(t_1 + CM_A + CM_B, t_2+ CP_B) + CP_A \\
		& =  max(t_1 + CM_B + CP_B, t_2 + CP_B, t_1 + CM(A) + CM_B) + CP_A \\
		& =  max(max(t_1 + CM_B, t_2) + CP_B, t_1 + CM_A + CM_B) + CP_A \\
		& =  max(\scomp'(B) + CP_B, t_1 + CM_A +CM_B) + CP_A \\
		& =  \scomp'(A) + CP_A
	\end{align*}
	Case III: $CP_A \ge CM_A, CP_B < CM_B$
	\begin{align*}
	\scomp(B) + CP_B & =  max(t_1 + CM_A + CP_A + CP_B, t_2 + CP_A + CP_B, t_1 + CM_A\\ 
	&\pushright{+ CM_B + CP_B)}\\
	& \le  max(t_1 + CM_A + CP_B, t_2 + CP_B, t_1 + CM_B + CP_B) + CP_A\\
	& \le  max(t_1 + CM_A + CM_B, t_2 + CP_B, t_1 + CM_B + CP_B) + CP_A \\
	& =  max(max(t_1 + CM_B, t_2) + CP_B, t_1 + CM_A + CM_B) + CP_A \\
	& =  max(\scomp'(B) + CP_B, t_1 + CM_A +CM_B) + CP_A \\
	& =  \scomp'(A) + CP_A
\end{align*}
}

\end{proof}

\begin{theorem}
	Scheduled constructed by Algorithm~\ref{alg:OrderOfExecutionInfinteMemory} achieves optimal makespan.
\end{theorem}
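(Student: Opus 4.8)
The plan is to argue by exchange, using Lemma~\ref{lemma:swappingOfTasks} as the local swap step. First I would note that for the 2-machine flowshop it suffices to consider permutation schedules, in which both the communication and the computation resources process the tasks in the same order; this is the classical reduction for two-machine flowshop, and it is exactly the form that lines 9--12 of Algorithm~\ref{alg:OrderOfExecutionInfinteMemory} produce once a task order is fixed. Let $\sigma$ denote the total order on $ST$ built by the algorithm: all compute-intensive tasks ($S_1$) first, sorted by non-decreasing $CM$, followed by all communication-intensive tasks ($S_2$), sorted by non-increasing $CP$. I would then show that any permutation schedule can be converted into $\sigma$ by a sequence of adjacent transpositions, none of which increases the makespan, which makes $\sigma$ optimal.

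The key step is to check that $\sigma$ is exactly the order in which Lemma~\ref{lemma:swappingOfTasks} certifies ``$A$ before $B$'' as non-improvable to swap. Concretely, for any two tasks with $A$ preceding $B$ in $\sigma$ I would verify that one of the three conditions holds, by a three-way case split matching the three clauses of the lemma: if both lie in $S_1$ then $CP_A\ge CM_A$, $CP_B\ge CM_B$ and $CM_A\le CM_B$, which is condition~(i); if both lie in $S_2$ then $CP_A<CM_A$, $CP_B<CM_B$ and $CP_A\ge CP_B$, which is condition~(ii); and if $A\in S_1$, $B\in S_2$ then $CP_A\ge CM_A$ and $CP_B<CM_B$, which is condition~(iii). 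The case $A\in S_2$, $B\in S_1$ never arises because $\sigma$ places $S_1$ before $S_2$. The inequalities are non-strict in exactly the places ($CM_A\le CM_B$ in $S_1$, $CP_A\ge CP_B$ in $S_2$, and the convention $CP\ge CM\Rightarrow S_1$) needed so that ties are handled and every ordered pair is covered.

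With this correspondence in hand, the exchange argument runs like a bubble sort. Starting from any optimal permutation schedule $S^\star$ that differs from $\sigma$, there must be two contiguous tasks appearing in the opposite of their $\sigma$-order; writing them as $B$ then $A$ with $A$ before $B$ in $\sigma$, Lemma~\ref{lemma:swappingOfTasks} applied to the target order $A,B$ guarantees $A$-then-$B$ is at least as good as $B$-then-$A$, so restoring them to $A,B$ does not increase the makespan. I would then iterate: each such transposition strictly decreases the number of $\sigma$-inversions while never increasing the makespan, so after finitely many steps the schedule equals $\sigma$ and has makespan at most $\mu(S^\star)$, proving $\sigma$ optimal.

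The step I expect to be delicate is the passage from the lemma's local guarantee to a global one. The lemma only shows that, after the swapped pair, the communication resource is free at the same instant and the computation resource is free no later than before. I would make explicit that these two facts suffice: since the suffix of the schedule sees an identical communication-availability time and a computation-availability time that has not grown, the greedy start-time rule assigns every later task a start time that is no larger, so the final completion time cannot increase. This monotonicity is what licenses treating a single adjacent swap in isolation and chaining the swaps together into the full sort.
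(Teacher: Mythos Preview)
Your proposal is correct and follows essentially the same exchange argument as the paper: reduce to permutation schedules, then use Lemma~\ref{lemma:swappingOfTasks} as the adjacent-swap invariant to bubble-sort one order into the other. The only cosmetic difference is the direction of the conversion---the paper transforms the algorithm's schedule $S$ into the optimal order $O$ (each swap can only \emph{raise} the makespan, so $\mu(S)\le\mu(O)$), whereas you transform an optimal $S^\star$ into $\sigma$ (each swap can only \emph{lower} the makespan); both routes are equivalent, and your explicit treatment of the suffix-monotonicity step is a welcome clarification that the paper leaves implicit.
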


\begin{proof}
	
	Let $O$ be an optimal schedule. We assume that $O$ is a permuation schedule, If it is not the case, we make the order of computations same as the order of communications without increasing the makespan.  Suppose two tasks have opposite order on both resources then we position the second task just before the first task on the computation resource. It is evident that this change does not alter the optimal  makespan, we repeat this procedure until order of communications and computations is same in $O$.
	
	Let $S$ be the  schedule obtained from Algorithm~\ref{alg:OrderOfExecutionInfinteMemory}. We prove the theorem by converting $S$ to $O$ and showing that at each step makespan of intermediate schedule is not less than the original makespan. We rely on Lemma~\ref{lemma:swappingOfTasks} to convert $S$ to $O$.
	
	We traverse schedule $O$ from left to right and for each $ith$ task in sequence, we apply Lemma~\ref{lemma:swappingOfTasks} repetitively until order of task in the swapped schedule is same. It is obvious that after moving the $ith$ task at the beginning remaining schedule (schedule after $ith$ task) still satisfies one of the conditions of Lemma~\ref{lemma:swappingOfTasks}.
	
	Let the final swapped schedule is $S_{final}$. From Lemma~\ref{lemma:swappingOfTasks}, $makespan(S_{final})$ $ \ge $ $makesapn(S)$. From the construction, $makespan(S_{final})$ $ \le$ $ makesapn(O)$. As $O$ is an optimal schedule, hence $makespan(S) = makesapn(O)$. This completes our proof.
\end{proof}

\subsection{Finite Memory}

We now consider the general case, in which the memory limit is a
constraint for the schedule. This is related to previous work by
Papadimitriou et. al~\cite{Papadimitriou:1980:FSL:322203.322213}, in which the second machine
can only handle a bounded number of tasks. Our problem generalizes this
work to heterogeneous memory consumption among tasks, with an
additional difference: memory usage starts at the beginning of the
first part of a task (instead of at the end of the first part). This
requires to provide a slightly different NP-completeness proof, as
given below.

\begin{theorem}\label{th:npComplete}
	Problem $DT$ is NP-complete.
\end{theorem}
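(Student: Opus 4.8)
The plan is to establish both membership in NP and NP-hardness. Membership is the easy part: given a candidate schedule $S$ described by the values $\scomm(i)$ and $\scomp(i)$ for every task, I would verify in polynomial time that the communication resource and the computation resource each process one task at a time, that $\scomp(i) \ge \scomm(i) + CM_i$ for every task, and that $\mu(S) \le L$. The memory constraint itself need only be checked at the finitely many event times (the communication start times and the computation completion times), since the set of live tasks changes only there; summing the footprints of the live tasks at each such instant is polynomial.

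For hardness I would reduce from \textsc{3-Partition}, which is strongly NP-complete~\cite{GareyJohnson}: given $3m$ integers $a_1,\dots,a_{3m}$ with $\sum_j a_j = mB$ and $B/4 < a_j < B/2$, decide whether they split into $m$ triples each summing to $B$ (the bounds on $a_j$ force every feasible group to contain exactly three items). Strong NP-completeness is what lets me encode the $a_j$ directly as task durations while keeping all numbers polynomially bounded. I would introduce one item task $I_j$ per integer, with communication time (hence memory footprint) tied to $a_j$, together with a family of enforcer tasks whose communication and computation parameters build a rigid skeleton. The enforcers are designed to reserve all but $B$ units of the capacity $C$ during $m$ disjoint memory windows that tile $[0,L]$, and the makespan bound $L$ is chosen so that neither resource can be idle in any schedule of length $L$. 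Under this frame each item is forced to be live inside exactly one window; since each window leaves only $B$ free units, the items assigned to a window have total footprint at most $B$, and as the footprints sum to $mB$ over $m$ windows, every window must carry footprint exactly $B$, i.e. exactly one triple summing to $B$.

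With this setup the forward direction is immediate: a valid \textsc{3-Partition} yields the obvious schedule placing the $k$-th triple in the $k$-th window and meeting $L$. The main obstacle is the converse, namely proving that \emph{any} schedule with $\mu(S)\le L$ necessarily realizes a legal partition. The delicate point, and the reason Papadimitriou et al.'s argument cannot be reused verbatim, is that here a task occupies memory from the \emph{start} of its communication rather than from the end of it; consequently an item can begin to consume memory while an enforcer is still holding a window open, and I must rule out any ``leakage'' in which an item straddles two windows or exploits slack left by the enforcers. I expect the bulk of the work to be a careful timing analysis showing that, under the no-idle-time consequence of the makespan bound together with the memory reserved by the enforcers, the live set at every instant is confined to a single window. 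The gadget parameters — the enforcer footprints, their computation lengths, and the precise relation between $C$, $B$ and $L$ — must be tuned so that any departure from the intended window structure either exceeds the memory capacity $C$ or opens an idle interval that pushes the makespan beyond $L$.
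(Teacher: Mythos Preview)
Your plan is essentially the paper's: reduce from \textsc{3-Partition}, build enforcer tasks that carve the schedule into $m$ windows, choose $L$ so that both resources are saturated, and read a partition off any length-$L$ schedule. But there is one structural choice where your sketch diverges and, as stated, has a gap. You propose to encode $a_j$ in the \emph{communication} time (hence memory footprint) of the item tasks and then argue that, since ``each window leaves only $B$ free units,'' the items assigned to a window have total footprint at most $B$. That inference is not valid: the memory bound is instantaneous, so it constrains only the items that are \emph{simultaneously} live, not the cumulative footprint of all items whose life intervals intersect the window. Unless your gadget forces a single instant at which every item of a given window is alive at once, the sum-$\le B$ conclusion does not follow from capacity alone; you would instead have to extract it from the no-idle-time constraint on the communication resource (sum of item communication durations equals the window's communication budget), which is a time argument, not a memory argument.

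The paper sidesteps this by encoding $a_j$ in the \emph{computation} time of the item tasks and giving every item unit communication time (hence unit footprint). Concretely it uses $4m+1$ tasks: enforcers $K_0,\dots,K_m$ with large communication time $b'=b+6x$ and tiny computation time $3$ (with $CM_{K_0}=0$, $CP_{K_m}=0$), and items $A_i$ with $CM_{A_i}=1$ and $CP_{A_i}=a_i+2x$, under capacity $C=b'+3$ and target $L=m(b'+3)$. Here memory serves only to force at most one enforcer active at a time, pinning down the skeleton; the partition itself comes from no-idle-time on the \emph{computation} resource, since the $m$ gaps between enforcer computations each have length exactly $b'$ and the items' computation times must tile them. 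This cleanly separates the two roles --- memory fixes the frame, time does the partitioning --- and avoids the simultaneous-liveness issue your footprint argument would otherwise need to resolve.
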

\begin{proof}
	
	It is easy to see that the $DT$ belongs in NP: given a schedule, one
	can check in linear time that at each start of a communication, the
	memory constraint is satisfied, and that task starts computation only
	after its input data is transferred to $M$.
	
	In order to prove NP-hardness, we use a reduction from the
	well-known NP-complete problem 3 Partition~\cite{GareyJohnson}: 
	
	\noindent \textbf{Three Partition Problem} (\threepart): Given a set of
	$3m$ integers $A = \{ a_1, \cdots, a_{3m }\}$, is there a partition
	of $A$ into $m$ triplets $TR_i = \{a_{i_1}, a_{i_2}, a_{i_3}\}$,
	such that $\forall i, a_{i_1} + a_{i_2} + a_{i_3} = b$, where
	$b=(1/m) \sum a_i $?
	
	Let us first show that \threepart problem reduces in polynomial time to
	problem $DT$. Suppose that we are given an instance $A = \{ a_1,
	\cdots, a_{3m }\}$ of \threepart. It is immediately obvious that
	$a_i>1$, since we can always add sufficiently large integers to
	the $a_i$ values and scale the problem accordingly. This scaling will not
	affect in any way the existence of a solution for the instance of
	\threepart problem.
	
	From such an instance, we define $x = max\{a_i:1\le i\le 3m\}$,
	and we construct an instance $I$ of the problem $DT$ with $4m+1$
	tasks, whose characteristics are given in
	Table~\ref{table:np.completeness.tasks}.
	
	\begin{table}[htb]
		\begin{tabular}{ |c|c|c| }
			\hline
			Task & Communication time & Computation time \\ \hline
			$K_0$ & $0$ & $3$ \\ \hline
			$K_1, \cdots, K_{m-1}$ & $b'=b+6x$ & $3$\\ \hline
			$K_m$ & $b'=b+6x$ & $0$ \\ \hline
			$1\le i \le 3m, A_i$ & $1$ & $a_i' = a_i + 2x$\\ \hline
		\end{tabular}
		
		\noindent Memory capacity: $C=b'+3$\\
		\noindent Target makespan: $L=m(b'+3)$
		\caption{Definition of tasks in the reduction from \threepart.}
		\label{table:np.completeness.tasks}
	\end{table}
	
	As mentioned previously, we consider memory requirement of a task is equal to its communication time. If it is not the case, we can adjust $C$ such that at any point in a schedule at max one $K_i$ and three $A_i$ tasks can be active.

	We show that $I$ has a schedule $S$ with makespan at most $L$ if and
	only if the original \threepart instance has a solution. Notice that
	the sum of communication times and the sum of computation times are both
	equal to $L$, therefore a valid schedule of makespan at most $L$ has
	makespan exactly $L$, with no idle time on both resources. It
	indicates that the first task is $K_0$ and the last task is $K_m$.

	\begin{figure}[htb]
		\tikzset{xtick/.style={inner xsep=0pt, inner ysep=3pt, minimum size=0pt, draw},%
			task/.style args={#1start#2length#3res#4color#5}{rounded corners, draw, inner
				sep=0pt, fill=#5, label=center:#1, fit={(#2,#4*0.75) (#2+#3,#4*0.75+0.75)}},%
			vert/.style={inner sep=1pt, fill=black, circle, draw, label=#1}
		}
		\newcommand{\scheduleNoName}[1]{
			\draw[->] (-0.4, 0) -- (#1, 0) node[below] {$t$};
			\draw (0, 0) -- (0, 1.5) node[pos=0.25, left] {Comp.}
			node[pos=0.75, left] {Comm.};
			\draw[dashed,gray] (0, 0.75) -- (#1, 0.75);
		}
		\centering
		\begin{tikzpicture}[yscale=0.7, thick, xscale=0.6]
		\scheduleNoName{12.5}
		\node[task=$A_{1,1}$ start 0 length 1 res 1 color cyan]{};
		\node[task=$A_{1,2}$ start 1 length 1 res 1 color blue!40!white]{};
		\node[task=$A_{1,3}$ start 2 length 1 res 1 color blue!70!white]{};
		\node[task=$K_0$ start 0 length 3 res 0 color gray!40!white]{};
		\node[task=$K_1$ start 3 length 6 res 1 color green]{}; 
		\node[task=$A_{1,1}$ start 3 length 1.8 res 0 color cyan]{};
		\node[task=$A_{1,2}$ start 4.8 length 2.3 res 0 color blue!40!white]{};
		\node[task=$A_{1,3}$ start 7.1 length 1.9 res 0 color blue!70!white]{};
		\node[task=$A_{2,1}$ start 9 length 1 res 1 color cyan]{};
		\node[task=$A_{2,2}$ start 10 length 1 res 1 color blue!40!white]{};
		\node[task=$A_{2,3}$ start 11 length 1 res 1 color blue!70!white]{};
		\node[task=$K_1$ start 9 length 3 res 0 color green]{};
		\draw[<->,thin] (0, -0.2) -- node[below]{$3$} (3, -0.2) ;
		\draw[<->,thin] (9, -0.2) -- node[below]{$3$} (12, -0.2) ;
		\draw[<->,thin] (3, -0.2) -- node[below]{$b'$} (9, -0.2) ;
		\end{tikzpicture}
		\caption{ \label{fig:firstSegment} Pattern of feasible schedule $S$.}
	\end{figure}

	If the \threepart instance has a solution, $A$ can be partitioned into
	$m$ triplets $TR_i = \{a_{i_1}, a_{i_2}, a_{i_3}\}$ such that $\forall
	i, a_{i_1} + a_{i_2} + a_{i_3} = b$, then we can construct a feasible
	schedule $S$ without idle times by the pattern depicted in
	Figure~\ref{fig:firstSegment}. The communications of tasks in $TR_i$ take
	place during the computation of task $K_{i-1}$, and the computations
	of tasks in $TR_i$ take place during the communication of task
	$K_i$. Since the memory capacity is $C=b'+3$, all tasks from a triplet
	can fit in memory with a task $K_i$, and their durations are exactly
	equal to the communication time of $K_i$. This schedule is thus
	feasible, and has length exactly $L$.
	
	\medskip
	
	We now prove that any feasible schedule of $I$ corresponds to a valid
	decomposition of $A$ for \threepart. Indeed, we argue that every
	feasible schedule has to consist of $m$ segments like the one shown in
	Figure~\ref{fig:firstSegment}. Each segment provides a triplet
	$\{a_{i_1}, a_{i_2}, a_{i_3}\}$ such that $a_{i_1} + a_{i_2} + a_{i_3}
	= b$.
	
	Any schedule $S$ of $I$ having no idle time must start with $K_0$. We
	first show that no other $K_i$ task can be active with $K_0$,
	otherwise we would get idle time on computation resource. Indeed, the
	communication of such a task $K_i$ would end at time at least $b'>3 +
	6x$, but at most two $A_i$ tasks can be computed, and they end at time
	at most $3+2max\{a_i':1\le i\le 3m\} = 3 + 6x$
	
	Hence three $A_i$ tasks must follow $K_0$. The memory requirement of
	other $K_i$ tasks is $b'$ and $2b'>C$, therefore at any point in the
	schedule at most one $K_i$ task can be active. Since the total
	duration of all $K_i$ tasks is $3 + (m-1)(b'+3) + b' = m (b'+3)=L$, at
	each point in $S$ exactly one $K_i$ task is active.
	
	With these $K_i$ tasks in place, the schedule on the computation
	resource contains $m$ slots of length exactly $b'$, in which all $A_i$
	tasks must fit without preemption. We can thus define triplet $TR_i$
	as the set of tasks which execute during the communication phase of
	task $K_i$. Since at each point in $S$, exactly one $K_i$ task is
	active, and since $S$ has no idle time on the computation resource,
	the total computation time of tasks in $TR_i$ is exactly $b'$, and thus
	$a_{i_1} + a_{i_2} + a_{i_3} = b$. This partition is thus a valid
	solution for the \threepart instance $A$. 
	
\end{proof}


This theorem shows that adding a memory constraint to our
problem makes it more difficult. One additional difficulty
compared to infinite memory capacity 2-machine flowshop~\cite{johnson} is that
it may not be optimal to consider the same ordering on both
machines:

	\begin{proposition}
	There exists an instance of $DT$ for which in all optimal
	schedules, the communication order of tasks is different
	from their computation order.
\end{proposition}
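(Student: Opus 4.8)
The plan is to prove the proposition by exhibiting one explicit instance of $DT$ and verifying two things about it: that it admits a non-permutation schedule meeting a makespan lower bound, and that no permutation schedule can meet that bound. First I would fix a small instance, with only a handful of tasks and a specific capacity $C$, chosen (following the convention that the memory requirement equals $CM_i$) so that the footprints of two communication-intensive tasks fit pairwise with a compute-intensive one, but the relevant footprints cannot all be resident at once. The point of this sizing is that, since a task occupies memory over its whole span from $\scomm(i)$ to $\scomp(i)+CP_i$, it is the computation order -- not only the communication order -- that controls which footprints are simultaneously resident; this is exactly the degree of freedom that a permutation schedule gives up, so this regime (where the footprints can coexist pairwise) is the only one in which a crossing can possibly help.

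Second, I would display a concrete feasible schedule $S^\star$ in which some communication-intensive task is communicated earlier than another but computed later than it, so that the communication order and the computation order genuinely disagree; I would then check directly that at every instant the resident memory stays within $C$, and compute $\mu(S^\star)$. In parallel I would establish a matching lower bound $\mu(S)\ge L$ valid for every feasible schedule -- the natural candidate being $L$ equal to the total communication time (or the analogous computation-side bound), as in the no-idle argument already used in the finite-memory reduction -- so that $S^\star$ is optimal with $\mu(S^\star)=L$.

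The decisive and hardest step is to rule out every permutation schedule. Here I would argue that, when the communication and computation orders coincide, the presence intervals $[\scomm(i),\scomp(i)+CP_i]$ become monotone in that common order, and that for the chosen instance this monotone (``staircase'') structure forces three footprints to be resident simultaneously at makespan $L$, violating $C$; the only way a permutation could avoid the overflow is to insert idle time and thereby exceed $L$. I would make this rigorous by enumerating the finitely many communication orders and showing, for each one, that either the memory bound is exceeded by every realization achieving makespan $L$, or else the makespan is strictly larger than $L$. The main obstacle is twofold: calibrating the $CM_i$, $CP_i$ and $C$ so that the crossing in $S^\star$ is the \emph{only} way to hold the peak memory at $C$ while keeping both resources busy, and then carrying out the per-order case analysis cleanly. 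This is where the span-based memory constraint and the makespan bound must be controlled at the same time, and it is the part I expect to require the most care.
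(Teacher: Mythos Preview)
Your plan is correct and follows the same overall route as the paper: exhibit a concrete instance, display a feasible non-permutation schedule, and show that every permutation schedule is strictly worse. The paper's version is more casual than yours. It fixes a six-task instance with capacity $C=10$, gives a non-permutation schedule of makespan $22$ (which happens to equal $\sum_i CP_i$, so your lower-bound framing applies verbatim), and for the permutation side simply asserts that the best same-order schedule has makespan $23$, ``obtained by exhaustive search,'' together with an informal explanation of why the usual swap argument breaks under the memory bound. Your proposal to match a natural lower bound and then carry out a per-order case analysis is a tidier and more self-contained way to handle the permutation side than appealing to an unseen enumeration; the paper's version, in turn, is shorter and adds the useful intuition that swapping the two offending tasks is blocked precisely because one of them cannot start its communication earlier (memory) nor delay its computation (memory again, via the next task). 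Either argument suffices.
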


\begin{table}
	\begin{center}
		\begin{tabular}{|c|c|c|c|}
			\hline
			\multirow{2}{*}{Task} & Memory Req & \multirow{2}{*}{Comm Time} & \multirow{2}{*}{Comp Time}\\  
			&=Comm Volume && \\ \hline
			A & 0 & 0 & 5\\ \hline
			B & 4 & 4 & 3\\ \hline
			C & 1 & 1 & 6\\ \hline
			D & 3 & 3 & 7\\ \hline
			E & 6 & 6 & 0.5\\ \hline
			F & 7 & 7 & 0.5\\ \hline
		\end{tabular}
	\end{center}
	\caption{Example instance where ordering on both resources
		has to be different.}
	\label{table:different.order}
\end{table}

\begin{figure}
	\centering
	\newcommand{\taskA}[1]{\node[comp=$A$ start #1 length 5 color cyan] {};}
	\newcommand{\taskB}[2][0]{\task[#1]{$B$}{#2}{4}{3}{blue!25!white}}
	\newcommand{\taskC}[2][0]{\task[#1]{$C$}{#2}{1}{6}{blue!50!white}}
	\newcommand{\taskD}[2][0]{\task[#1]{$D$}{#2}{3}{7}{blue!75!white}}
	\newcommand{\taskE}[2][0]{\node[comm=$E$ start #2 length 6 color green!25!white]{};%
		\node[comp={} start #2+6+#1 length 0.5 color green!25!white]{};}
	\newcommand{\taskF}[2][0]{\node[comm=$F$ start #2 length 7 color green!50!white]{};%
		\node[comp={} start #2+7+#1 length 0.5 color green!50!white]{};}
	
	\subfloat[][Optimal schedule with common ordering on both
	resources]{
		\begin{tikzpicture}[yscale=0.7, thick, xscale=0.3]
		\schedule{24}{}{5,8,15,21.5,23}
		\taskA{0}
		\taskB[1]{0}
		\taskD[1]{4}
		\taskE[1]{8}
		\taskC[0.5]{14}
		\taskF{15.5}
		\end{tikzpicture}
		\label{fig:different.order.same}
	}
	
	\subfloat[][Schedule with different ordering on both
	resources]{
		\begin{tikzpicture}[yscale=0.7, thick, xscale=0.3]
		\schedule{24}{}{5,8,14,22}
		\taskA{0}
		\taskB[1]{0}
		\taskC[3]{4}
		\taskD[6.5]{5}
		\taskE{8}
		\taskF{14.5}
		\end{tikzpicture}
		\label{fig:different.order.opt}
	}
	\caption{Schedules for the instance of
		Table~\ref{table:different.order} with a memory capacity of 10.}
	\label{fig:different.order}
\end{figure}

\begin{proof}
	Consider the instance described on
	Table~\ref{table:different.order}, in which memory capacity
	is $C=10$. Figure~\ref{fig:different.order.same} shows the
	best possible schedule when tasks are scheduled in the same
	order on both resources (obtained by exhaustive search). On
	the other hand, Figure~\ref{fig:different.order.opt} shows
	another schedule with lower makespan, in which the order is
	different.
	
	In the infinite memory case, the standard proof that an
	optimal schedule exists with the same order on both
	resources claims that it is possible to swap two tasks which
	do not satisfy this property. On
	Figure~\ref{fig:different.order}, this would mean swapping
	tasks $D$ and $E$. But the communication of task $E$ can not
	start earlier because it would not fit in memory with tasks
	$B$ and $C$, and delaying the computation of task $E$ after
	task $D$ would delay task $F$ because $E$ and $F$ do not fit
	in memory together. We can see that this claim does not hold
	in the constrained memory case. 
\end{proof}

\section{Data Transfer Order heuristics}
\label{sec:heuristics}

Algorithm~\ref{alg:OrderOfExecutionInfinteMemory} presented in Section~\ref{sec:theoreticalProof} achieves an optimal makespan when there is no memory constraint. This optimal value indicates a lower bound on the makespan of the constrained case. We denote this value with \textit{optimal makespan infinite memory} ($OMIM$). In the present Section, we propose different heuristics for the limited memory case, and we assess their efficiency with respect to this lower bound in Section~\ref{sec:expResults}.

We classify our heuristics into mainly three categories. In the first category, the order of all computations and communications is computed in advance and the same order is followed on both resources. In the second category, the next task to schedule is dyanmically chosen based on different criteria. The final category is based on combining strategies from the first two categories: a static ordering is precomputed and corrected dynamically to avoid idle time caused by memory limitations. In all of our strategies (except linear programming based strategy), communication and computations take place in the same order.

\subsection{Static Ordering}
	In this class of strategies, we compute the order of processing in advance based on criteria such as communication time and computation time. After computing the order, we follow the same sequence on computation and communication resources and make sure that the memory constraint is respected at each point in the schedule.

In Algorithm~\ref{alg:OrderOfExecutionInfinteMemory}, compute intensive tasks are sorted in increasing order of communication times. It allows tasks to utilize the computation resource maximally and make enough margin on the communication resource to accommodate more communication intensive tasks with maximum overlap. Communication intensive tasks are sorted in decreasing order of computation time, which allows tasks to utilize the margin created on communication resource. Hence, in this section, we obtain the orders by sorting tasks based on different combinations of communication and computation times.

\begin{enumerate}[label=\roman*)]
	\item \textit{order  of  optimal strategy infinite memory} ($OOSIM$): This heuristic uses the order given by Algorithm~\ref{alg:OrderOfExecutionInfinteMemory}, but respects the memory constraint at each point in the schedule. Hence the makespan of this heuristic may be completely different from $OMIM$.
	
	\item \textit{increasing order of communication strategy} ($IOCMS$): Tasks are ordered  in non-decreasing order of communication time. 
	
	\item \textit{decreasing order of computation strategy} ($DOCPS$): Tasks are ordered in non-increasing order of computation time. 
	\item \textit{increasing order of communication plus computation strategy} ($IOCCS$): Tasks are ordered in non-decreasing order of the sum of their communication and computation times.
	\item \textit{decreasing order of communication plus computation strategy} ($DOCCS$): Tasks are ordered in non-increasing order of the sum of their communication and computation times.
	
\end{enumerate}
\begin{table}[htb]
	\begin{center}
		
		\begin{tabular}{|c|c|c|c|}
			\hline
			\multirow{2}{*}{Task} & Memory Req & \multirow{2}{*}{Comm Time} & \multirow{2}{*}{Comp Time}\\  
			&=Comm Volume && \\ \hline
			A & 3 & 3 &  2\\ \hline
			B & 1 & 1 & 3\\ \hline
			C & 4 & 4 & 4\\ \hline
			D & 2 & 2 & 1\\ \hline
		\end{tabular}
		\caption{\label{tab:staticOrderExample} A task set for static order schedules.}
	\end{center}
\end{table}

\begin{figure}[htb]
	\newcommand{\taskA}[2][0]{\task[#1]{$A$}{#2}{3}{2}{cyan}}
	\newcommand{\taskB}[2][0]{\task[#1]{$B$}{#2}{1}{3}{blue!40!white}}
	\newcommand{\taskC}[2][0]{\task[#1]{$C$}{#2}{4}{4}{blue!70!white}}
	\newcommand{\taskD}[2][0]{\task[#1]{$D$}{#2}{2}{1}{blue}}
	
	\centering
	\subfloat[][Infinite Memory Capacity]{
		\begin{tikzpicture}[scale=0.6]
		\schedule{12.5}{OMIM}{1,4,5,8,9,10,11,12}
		\taskB{0}
		\taskC{1}
		\taskA[1]{5}
		\taskD[1]{8}
		\end{tikzpicture}
	}
	
	\subfloat[][Memory Capacity: 6]{
		\begin{tikzpicture}[yscale=0.6, xscale=0.45]
		\begin{scope}
		\schedule{15.5}{OOSIM}{1,4,5,9,12,14,15}
		\taskB{0}
		\taskC{1}
		\taskA{9}
		\taskD{12}
		\end{scope}
		\begin{scope}[yshift=-2.75cm]
		\schedule{16.5}{IOCMS}{1,3,4,5,6,8,12,16}
		\taskB{0}
		\taskD[1]{1}
		\taskA{3}
		\taskC{8}
		\end{scope}
		\begin{scope}[yshift=-5.5cm]
		\schedule{14.5}{DOCPS}{4,5,8,11,13,14}
		\taskC{0}
		\taskB[3]{4}
		\taskA{8}
		\taskD{11}
		\end{scope}
		\begin{scope}[yshift=-8.25cm]
		\schedule{16.5}{IOCCS}{2,3,6,8,12,16}
		\taskD{0}
		\taskB{2}
		\taskA{3}
		\taskC{8}
		\end{scope}
		\begin{scope}[yshift=-11cm]
		\schedule{17.5}{DOCCS}{4,8,11,12,13,14,16,17}
		\taskC{0}
		\taskA{8}
		\taskB[1]{11}
		\taskD[2]{12}
		\end{scope}
		\end{tikzpicture}
	}
	
	\caption{ \label{fig:staticOrderExample} Different static order heuristic schedules for Table~\ref{tab:staticOrderExample}}. 
\end{figure}


In order to highlight the different behaviors of these static strategies, we propose on Table~\ref{tab:staticOrderExample} an example instance, and on Figure~\ref{fig:staticOrderExample} the corresponding schedules for all these heuristics.

\subsection{Dynamic Selection}
Dynamic strategies are based on the following principle: when the communication resource is idle, a task is chosen based on a selection criterion which differs depending on the heuristic, among those which fit in memory and induce minimum idle time on the computation resource.  For example, if the selection criterion is to choose a highly compute intensive task, then we compute the ratio of computation time and communication time for all tasks, and we select a task with the maximum ratio among those which induce minimum idle time on the computation resource and fit in the currently available memory. If no task fits in memory then we leave the resource idle at that point and proceed to the next event point. We also ensure that the order on both resources is the same, by ordering tasks on the computation resource accordingly.



\begin{enumerate}[label=\roman*)]
	\item \textit{largest communication task respects memory restriction} ($LCMR$): A task with the largest communication time is chosen. 
	\item \textit{smallest communication task respects memory restriction} ($SCMR$): A task with the smallest communication time is chosen.
	\item \textit{maximum accelerated task respects memory restriction} ($MAMR$): A task with the maximum ratio of computation time to communication time is chosen.
\end{enumerate}
\begin{table}[htb]
	\begin{center}
		
		\begin{tabular}{|c|c|c|c|}
			\hline
			\multirow{2}{*}{Task} & Memory Req & \multirow{2}{*}{Comm Time} & \multirow{2}{*}{Comp Time}\\  
			&=Comm Volume && \\ \hline
			A & 3 & 3 & 2\\ \hline
			B & 1 & 1 &  6\\ \hline
			C & 4 & 4 & 6\\ \hline
			D & 5 & 5 & 1\\ \hline
		\end{tabular}
		\caption{\label{tab:dynamicSelectionExample} A task set for dynamic schedules.}
	\end{center}
\end{table}

\begin{figure}[htb]
	\centering
	
	\newcommand{\taskA}[2][0]{\task[#1]{$A$}{#2}{3}{2}{cyan}}
	\newcommand{\taskB}[2][0]{\task[#1]{$B$}{#2}{1}{6}{blue!40!white}}
	\newcommand{\taskC}[2][0]{\task[#1]{$C$}{#2}{4}{6}{blue!70!white}}
	\newcommand{\taskD}[2][0]{\task[#1]{$D$}{#2}{5}{1}{blue}}
	
	\centering
	\begin{tikzpicture}[yscale=0.6,xscale=0.3]
	\begin{scope}
	\schedule{24}{LCMR}{1,6,7,8,11,13,17,23}
	\taskB{0}
	\taskD[1]{1}
	\taskA{8}
	\taskC{13}
	\end{scope}
	\begin{scope}[yshift=-2.75cm]
	\schedule{26}{SCMR}{1,4,7,9,13,19,24,25}
	\taskB{0}
	\taskA[3]{1}
	\taskC{9}
	\taskD{19}
	\end{scope}
	\begin{scope}[yshift=-5.5cm]
	\schedule{25}{MAMR}{1,5,7,13,16,18,23,24}
	\taskB{0}
	\taskC[2]{1}
	\taskA{13}
	\taskD{18}
	\end{scope}
	\end{tikzpicture}
	\caption{ \label{fig:dynamicSelectionExample} Different dynamic heuristic schedules for a task set of Table~\ref{tab:dynamicSelectionExample} with a memory capacity of 6.}
\end{figure}

We highlight the different dynamic heuristics with the instance described on Table~\ref{tab:dynamicSelectionExample} (these heuristics are too similar on the instance from the previous class), and Figure~\ref{fig:dynamicSelectionExample} shows the corresponding schedules.

\subsection{Static Order with Dynamic Corrections}

In this class of strategy, we precompute the order of tasks based on some criterion and then follow this ordering as much as possible. But when the communication resource is idle because the memory requirement of the next task is too high, then we select a task  with a dynamic strategy. After a task is dynamically selected, we update the remaining order without this task. This class of strategy takes advantage of static information in the form of precomputed order and dynamic corrections to minimize the idle time due to memory constraint.

In all strategies of this class, the initial order is $OMIM$ order,  obtained by Algorithm~\ref{alg:OrderOfExecutionInfinteMemory}. We define the following heuristics based on how we select a task from the set of tasks which fit in memory and induce minimum idle time on the computation resource. If no task fits in memory then we leave the resource idle and forward to the next event point.

\begin{enumerate}[label=\roman*)]
	\item \textit{optimal order infinite memory largest communication task respects memory restriction} ($OOLCMR$): A task with the largest communication time is chosen from the set.
	\item \textit{optimal order infinite memory smallest communication task respects memory restriction} ($OOSCMR$): A task with the smallest communication time is chosen from the set.
	\item \textit{optimal order infinite memory maximum accelerated task respects memory restriction} ($OOMAMR$): A task  with the highest ratio of computation time to communication time is chosen from the set.
\end{enumerate}

\begin{table}[htb]
	\begin{center}
		
		\begin{tabular}{|c|c|c|c|}
			\hline
			\multirow{2}{*}{Task} & Memory Req & \multirow{2}{*}{Comm Time} & \multirow{2}{*}{Comp Time}\\  
			&=Comm Volume && \\ \hline
			A & 4 & 4 &  1\\ \hline
			B & 2 & 2 & 6\\ \hline
			C & 8 & 8 & 8\\ \hline
			D & 5 & 5 & 4\\ \hline
			E & 3 & 3 & 2\\ \hline
		\end{tabular}
		\caption{\label{tab:staticOrderDynamicCorrectionsExample} A task set for static order dynamic corrections schedules.}
	\end{center}
\end{table}

\begin{figure}[htb]
	\begin{center}
	\newcommand{\taskA}[2][0]{\task[#1]{$A$}{#2}{4}{1}{cyan}}
	\newcommand{\taskB}[2][0]{\task[#1]{$B$}{#2}{2}{6}{cyan!50!black}}
	\newcommand{\taskC}[2][0]{\task[#1]{$C$}{#2}{8}{8}{blue!40!white}}
	\newcommand{\taskD}[2][0]{\task[#1]{$D$}{#2}{5}{4}{blue!70!white}}
	\newcommand{\taskE}[2][0]{\task[#1]{$E$}{#2}{3}{2}{blue}}
	\begin{tikzpicture}[yscale=0.6,xscale=0.2]
	\begin{scope}
	\schedule{35}{OOLCMR}{2,8,12,15,17,25,33}
	\taskB{0}
	\taskD[1]{2}
	\taskA{8}
	\taskE{12}
	\taskC{17}
	\end{scope}
	\begin{scope}[yshift=-3cm]
	\schedule{36}{OOSCMR}{2,5,8,11,15,19,27,35}
	\taskB{0}
	\taskE[3]{2}
	\taskA[1]{5}
	\taskD{10}
	\taskC{19}
	\end{scope}
	\begin{scope}[yshift=-6cm]
	\schedule{36}{OOMAMR}{2,8,11,14,17,25,33}
	\taskB{0}
	\taskD[1]{2}
	\taskE[1]{8}
	\taskA{12}
	\taskC{17}
	\end{scope}
	
	\end{tikzpicture}
	\caption{ \label{fig:staticOrderDynamicCorrectionsExample} Different static order dynamic corrections heuristic schedules for a task set of Table~\ref{tab:staticOrderDynamicCorrectionsExample} with a memory capacity of 9. The $OMIM$ order is $BCDAE$.}
	\end{center}
\end{figure}


As previously, we propose on Table~\ref{tab:staticOrderDynamicCorrectionsExample} an example instance for this class of strategies, and provide on Figure~\ref{fig:staticOrderDynamicCorrectionsExample} the corresponding schedules.

\subsection{Additional heuristics from previous work}

We also consider two other static heuristics for evaluation. The first heuristic is based on an algorithm, proposed by Gilmore and Gomory, to obtain the minimal cost sequence for a set of jobs~\cite{Gilmore-Gomory:1964}. This is a classical algorithm for 2-machine no-wait flow shop. In this algorithm, each job has a start and end state and a cost is associated to change the state. In our context, this cost can be seen as non-overlap time of computation for two adjacent tasks. Here is the main idea of this algorithm. Initially, a partial sequence of jobs is represented by a graph such that their overlap is maximum. Subsequently edges are greedily added to this graph to connect two components while minimizing the total non-overlap cost. When the graph is  connected, an edge interchange mechanism is used to determine the sequence of jobs, which ensures that the sequence has minimal cost. More details can be found in the original paper~\cite{Gilmore-Gomory:1964} and our implementation is publicly available~\cite{gitworkrepo}. This algorithm does not take memory constraints into account and only provides the sequence of processing. We use this sequence with a memory capacity restriction just like for other static heuristcs, and we call this heuristic \textit{Gilmore-Gomory} ($GG$).

The second heuristic is based on the First-Fit algorithm for the bin packing problem. The idea of this heuristic is to identify groups of tasks which can fit in memory together, called \emph{bins}. In First-Fit, tasks are considered in an arbitrary order and added to the first bin in which they can fit. If no suitable bin is found then a new bin is created and this task is added to it. When all tasks have been assigned to bins, we consider the sequence made of all tasks from the first bin, then tasks for the second bin, and so on. We call this heuristic \textit{Bin Packing} ($BP$).

\subsection{Solving Linear Program Iteratively}
\label{subsec:linearprogrammingformulation}
We use a mixed integer linear program to obtain the order of communications and computations. Recall that $CP_i$ and $CM_i$ represent computation and communication times of task $i$, and the memory capacity of the target system is $C$. In the linear program formulation, $s_i$ and $e_i$ (resp. $s'_i$ and $e'_i$) represent the start and end times of communication (resp. computation) for task $i$, and $MC(i)$ is the memory capacity requirement of task $i$. The formulation also contains for each pair of tasks $i$ and $j$ \textit{i)} a boolean variable $a_{ij}$ to denote the order of $i$ and $j$ on the communication resource \textit{ii)} a boolean variables $b_{ij}$ to denote the order of $i$ and $j$ on the computation resource, and \textit{iii)} a boolean variables $c_{ij}$ to denote the order of $s_i$ and $e'_j$.



\noindent Let $L=\sum_i (CP_i + CM_i)$. It is evident that $e_i =s_i + CM_i$ and $e'_i =s'_i + CP_i$. The linear program is given below.

\vspace*{-0.5cm}
\begin{align*}
	& \text{Minimize $l$ subject to:}\\
	\forall i, \quad & e'_i \leq l && \text{(task $i$ completes)}\\
	\forall i, \quad & e_i \leq s'_i&& \text{(task $i$ valid ordering)}\\
	\forall i, \forall j\ne i, \quad & \left\{\begin{aligned}
		e_j & \leq s_i +(1-a_{ij})L\\
		e_i & \leq s_j +a_{ij}L 
	\end{aligned} \right. &&
	\begin{aligned}
		&\text{(exclusive use of}\\
		&\text{  communication link)}
	\end{aligned}\\
	\forall i, \forall j\ne i, \quad & \left\{
	\begin{aligned}
		e'_j &\leq s'_i +(1-b_{ij})L\\
		e'_i &\leq s'_j +b_{ij}L
	\end{aligned}\right. && 
	\begin{aligned}
		&\text{(exclusive use of}\\
		&\text{  computation resource)}
	\end{aligned}\\
	\forall i, \forall j\ne i, \quad &\left\{
	\begin{aligned}
		e'_j &\leq s_i +(1-c_{ij})L\\
		s_i  &< e'_j +c_{ij}L\\
	\end{aligned}\right.&&
	\begin{aligned}
		&\text{(respect ordering}\\
		&\text{ of variables $c_{ij}$)}
	\end{aligned}\\
	\forall i, \quad & \sum_{r\ne i} (a_{ir} - c_{ir})MC(r) + MC(i) \le C && \text{(memory constraint)}\\
\end{align*}

We use GLPK solver v4.65 to solve the above formulation. We also add the following constraints to help the solver: $\forall i, \forall j\ne i$, $a_{ij} + a_{ji}=1$, $b_{ij} + b_{ji} = 1$, $c_{ij} \le a_{ij}$, $c_{ij} \le b_{ij}$, and  $c_{ij} + c_{ji} \le 1$. The solver was unable to  solve this MILP at the scale of our interest in limited time. Hence, we solve the linear program iteratively for a small subset of size $k=3, 4, 5, 6$: at the boundary of two iterations we fix the event (communication or computation) of an unfinished task started before the boundary point and consider other events flexible. The subsets are formed in the order in which tasks are submitted, which is arbitrary. For a given size $k$, we represent the makespan calculated by this heuristic as $lp.k$. We compute various $lp.k$ values for different memory capacities and observe that most of the other heuristics perform better than this heuristic. Figure~\ref{fig:iterativeLpSolution} shows the performance of different heuristics with MILP based heuristics for various memory capacities of a single trace file.

			\begin{figure}[htb]
				\begin{center}
				\includegraphics[scale=0.5]{./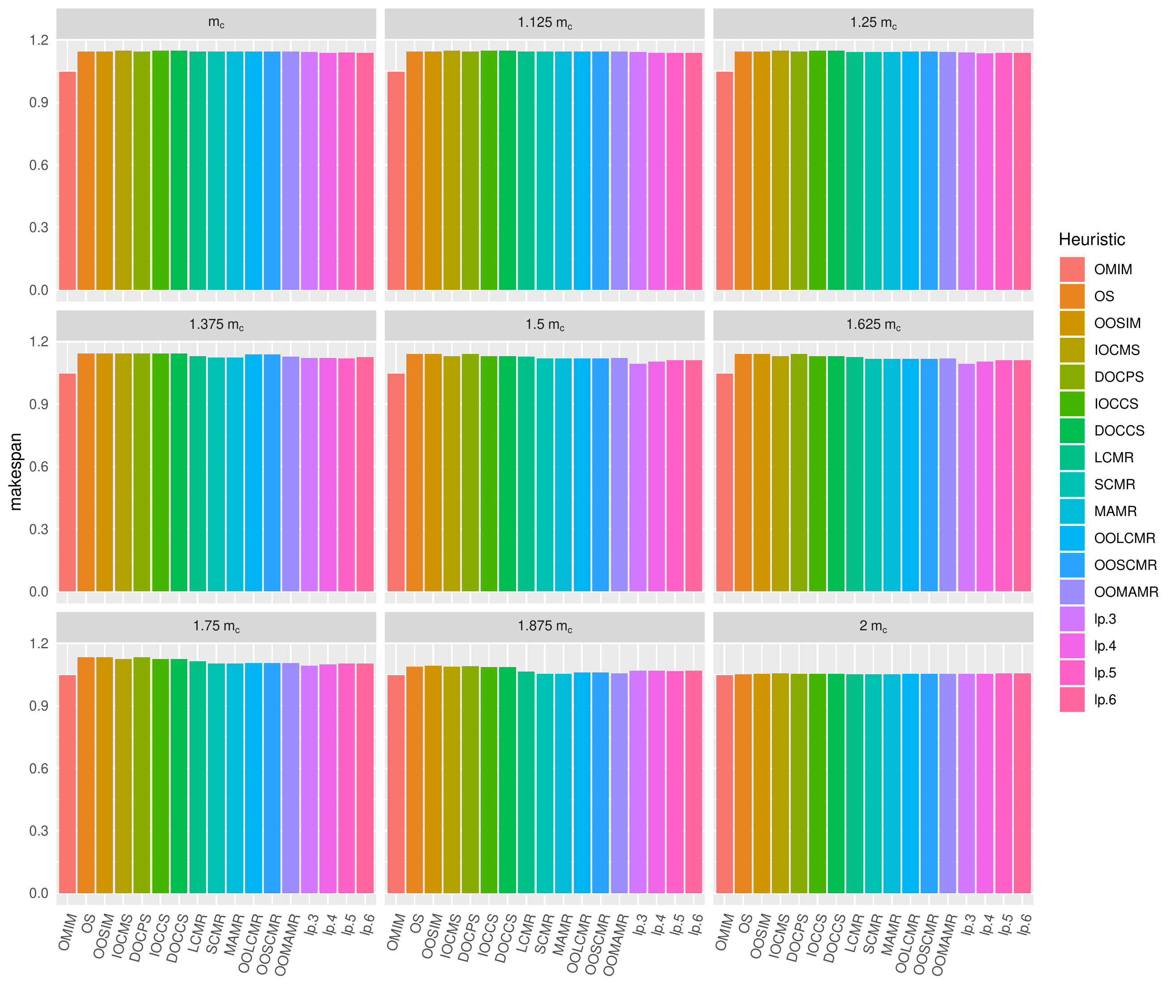}
				\caption{Comparision of proposed heuristics heuristics with MILP solution based heuristic for different memory capacities of a single trace file . Here  minimum memory requirement to process all tasks is $m_c=176KB$.}
				\label{fig:iterativeLpSolution}
				\end{center}
			\end{figure}
	
	\subsection{Favorable Situations for Heuristics}
		Based on the definition of proposed strategies and the optimality of Algorithm~\ref{alg:OrderOfExecutionInfinteMemory}, we present the scenarios which should be more favorable for each heuristic in Table~\ref{tab:heuristicsAndFavorableScenarios}. This allows programmers to use appropriate strategies to maximize communication-computation overlap for their applications. In this table, ``moderate memory capacity'' refers to the case where memory is constrained, but close to the maximal memory requirement of the $OMIM$ schedule.
		
	\begin{table}[htb]
		\begin{center}
		\begin{tabular}{|c|p{10.25cm}|}
			\hline
			\textbf{Heuristic} & \textbf{\hspace{2cm}Favorable Situation} \\ \hline
			$OOSIM$ & Memory capacity is not a restriction (\textcolor{green}{Optimal}) \\ \hline
			$IOCMS$ & Memory capacity is not a restriction and tasks are compute intensive (\textcolor{green}{Optimal}) \\ \hline
			$DOCPS$ & Memory capacity is not a restriction and tasks are communication intensive (\textcolor{green}{Optimal}) \\ \hline
			$IOCCS$ & Moderate memory capacity and most tasks are highly compute intensive \\ \hline
			$DOCCS$ & Moderate memory capacity and most tasks are highly communication intensive \\ \hline
			$LCMR$ & Limited memory capacity and significant percentage of tasks with large communication times are compute intensive\\ \hline
			$SCMR$ & Limited memory capacity and significant percentage of tasks with small communication times are compute intensive\\ \hline
			$MAMR$ & Limited memory capacity and significant percentage of tasks of both types\\ \hline
			$OOLCMR$ & Moderate memory capacity and significant percentage of tasks are communication intensive\\ \hline
			$OOSCMR$ & Moderate memory capacity and significant percentage of tasks are compute intensive \\ \hline
			$OOMAMR$ & Moderate memory capacity and significant percentage of highly compute and communication intensive tasks \\ \hline
		\end{tabular}\caption{~\label{tab:heuristicsAndFavorableScenarios}Heuristics and their favorable scenarios.}
		\end{center}
	\end{table}

			Some of these favorable scenarios can be clearly observed in our experimental results, on Figures~\ref{fig:ratio_to_optimal_hf} and~\ref{fig:ratio_to_optimal_ccsd}. For example, HF compute intensive tasks have small communication times, which explains why the $SCMR$ heuristic exhibits very good performance in limited memory cases. CCSD has significant percentage of both types of tasks, and indeed the performance of $OOLCMR$ and $OOSCMR$ is very close to optimality in moderate memory cases. 
	
	\section{Experimental Settings}
	\label{sec:expSetting}

We consider a machine called Cascade~\cite{Cascade}, available at PNNL, for our experiments. We obtain traces by running two molecular chemistry applications, double precision version of  HF and CCSD of NWChem~\cite{NWChem} package on 10 nodes of this machine. Each node is composed of 16 Intel Xeon E5-2670 cores. NWChem takes advantages of a Partitioned Global Address Space Programming Model called Global Arrays (GA)~\cite{GlobalArray} to use shared-memory programming APIs on distributed memory computers. GA dedicates one core of each node to handle other cores, hence we can view a node as being composed of 15 computational cores. We use 150 processes for each application and obtain 150 trace files. We run CCSD with Uracil molecules input and HF with SiOSi molecules (for Uracil molecules, HF has a much smaller workload, each processor executes only around 20 tasks, that is why we chose SiOSi input for HF execution). Each process executes around 300-800 tasks. Our data transfer model is quite simple and we consider that all data transfers between the local memory of each process and the GA memory take the same route. Modeling of different routes of data transfers for the same source-destination pair, bandwidth sharing for different source-destination pairs and network congestion is more challenging and part of our future work. This simple model is enough to provide insight to the application developers (or runtime system) about the ordering of data transfers for the same source-destination pair so as to maximize communication-computation overlap. Our model is easily adaptable to any source-destination pair when there is one fixed route between source and destination (such as between CPU and GPU, one copy engine to transfer data from CPU (resp. GPU) to GPU (resp. CPU) ).

Both applications mainly perform two types of computations, tensor transpose and tensor contraction. HF expects to specify a tile size and we set it to 100, so that each core can be used efficiently. CCSD automatically determines tile sizes at different program points based on the input molecules. Hence, HF operates on almost homogeneous tiles while CCSD uses more heterogeneous tiles.
	
		\begin{figure}[htb]
		\begin{center}
			\includegraphics[scale=0.5]{./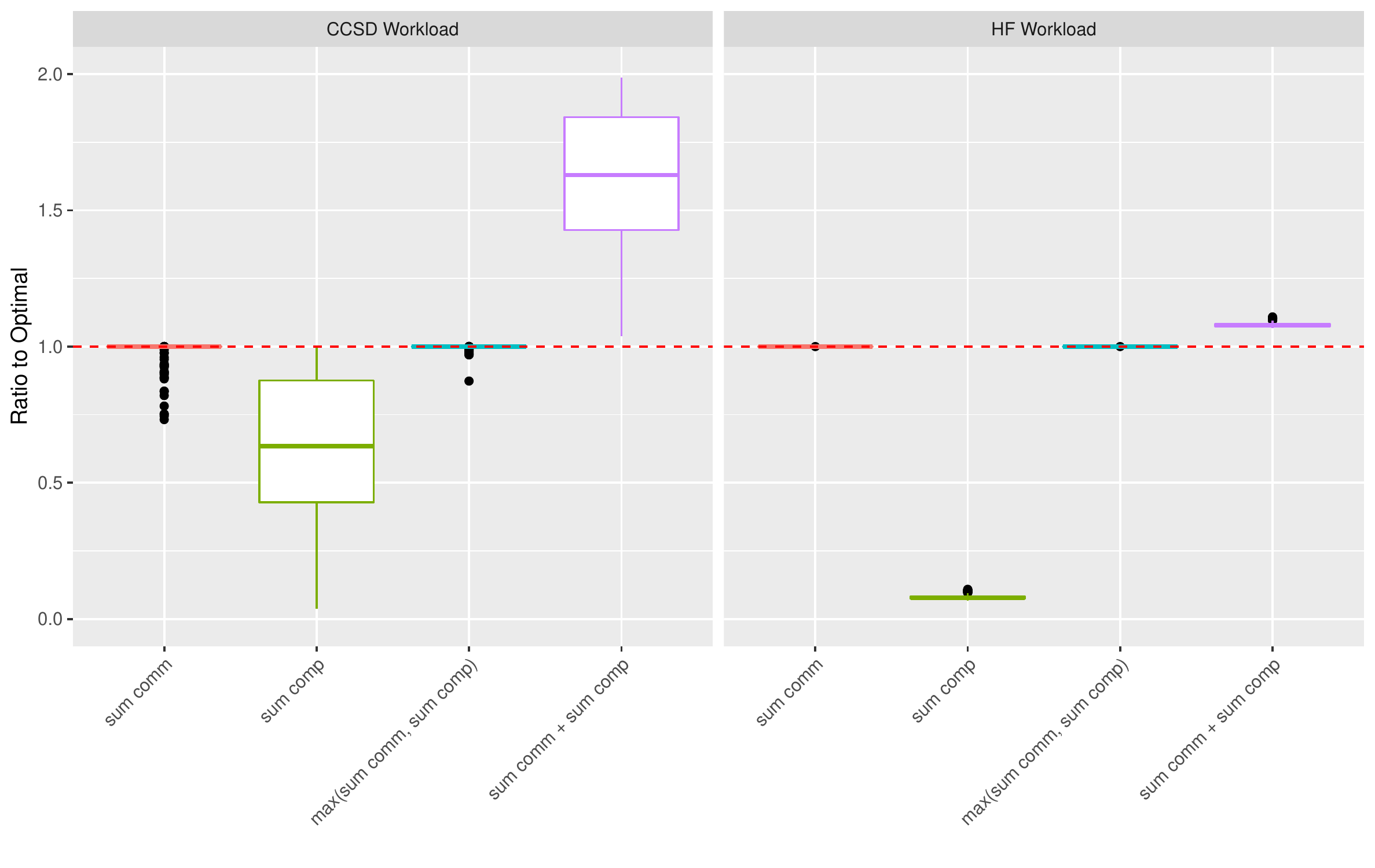}
			\caption{HF and CCSD tasks characteristics.}
			\label{fig:ApplicationProperties}
		\end{center}
		\end{figure}	
	
	\subsection{Workload Characteristics}

		To get more insights into the considered workloads, we provide  information about the instances we consider in Figure~\ref{fig:ApplicationProperties}. For each instance, we compute the sum of communication times (\textit{sum comm}) and sum of computation times  (\textit{sum comp}), and normalize it relatively to the $OMIM$ value. Figure~\ref{fig:ApplicationProperties} also shows the maximum of both values, which is a lower bound on the possible makespan of a schedule for this instance, and their sum, which is an upper bound: this represents the makespan of the sequential schedule, obtained with zero overlap between computation and communication. We can see that HF is a communication intensive application and at most 20\% overlap can be expected in the best scenario. On the other hand, in the CCSD workload, communications and computations are almost evenly distributed and a more significant overlap is possible.

	\section{Experimental Results}
	\label{sec:expResults}
	
		\begin{figure*}[htb]
		\includegraphics[scale=0.5]{./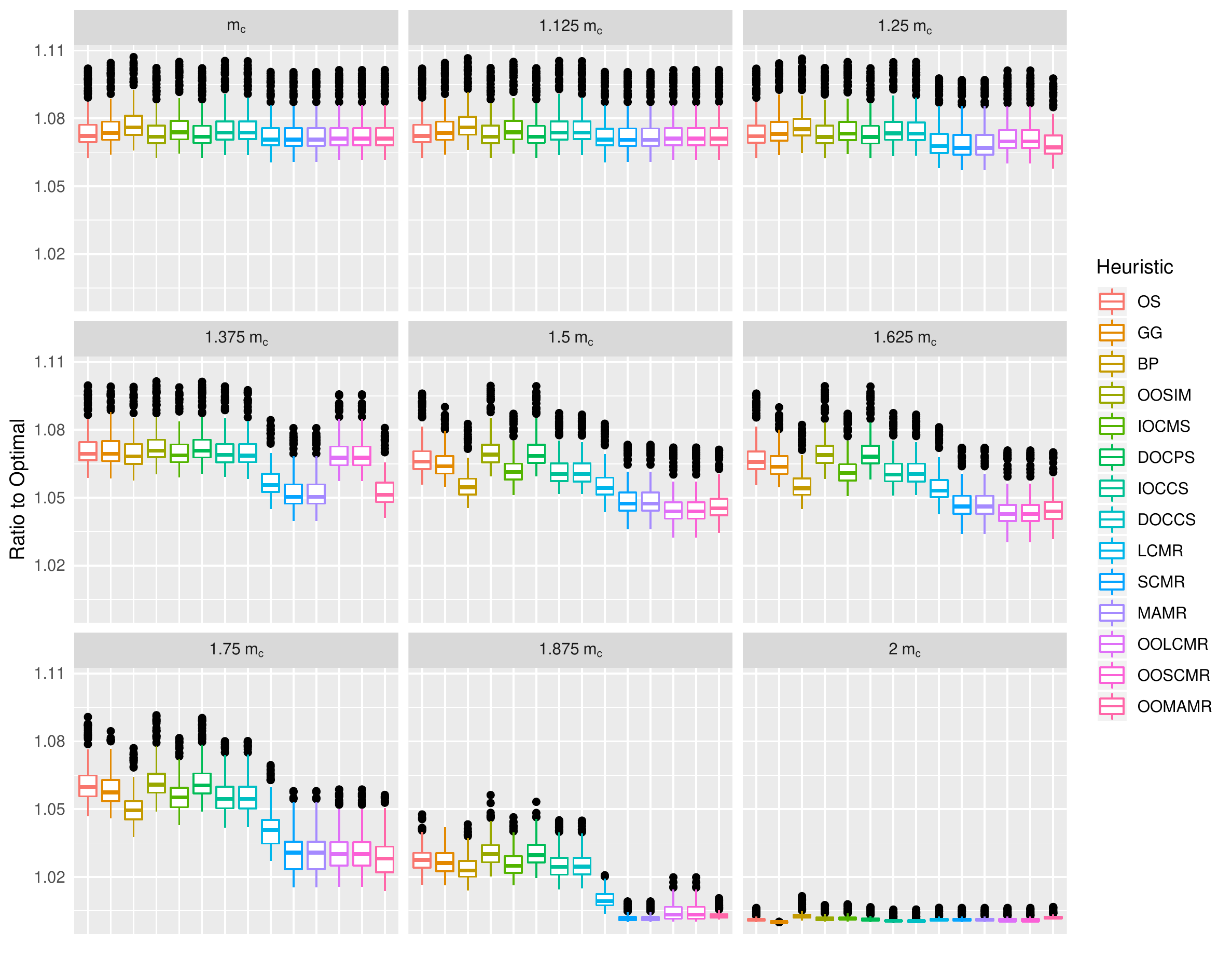}
		\caption{Comparison of different heuristics for HF with $m_c=176KB$.}
		\label{fig:ratio_to_optimal_hf}
		\end{figure*}	
		We evaluate our scheduling heuristics for several memory capacities. From the obtained traces, we first determine the minimum requirement of the memory capacity $m_c$ to execute all tasks. Then we observe the behavior of all heuristics with memory capacity $m_c$ to $2m_c$, in increments of $0.125m_c$. Our performance metric is the ratio to optimal $r$: if heuristic $H$ has makespan $M_H$ on an instance, and the optimal makespan for the infinite memory case is $OMIM$, then $r(H)=\frac{M_H}{OMIM}$ (lower values are better). This ratio is at least $1$, and a value close to $1$ indicates a well-suited  heuristic which achives maximum possible communication-computation overlap.

Figures ~\ref{fig:ratio_to_optimal_hf} and ~\ref{fig:ratio_to_optimal_ccsd} show the distribution of the performance of each heuristic for different memory capacities, where plots are categorized by memory capacities. For each memory capacity and each heuristic, the box on the plot displays the median, first and last quartile, and the whiskers indicate minimum and maximum values, with outliers are shown by black dots.
	\subsection{HF Performance}	
	
	\begin{figure}[htb]
	\begin{center}
		\includegraphics[scale=0.7]{./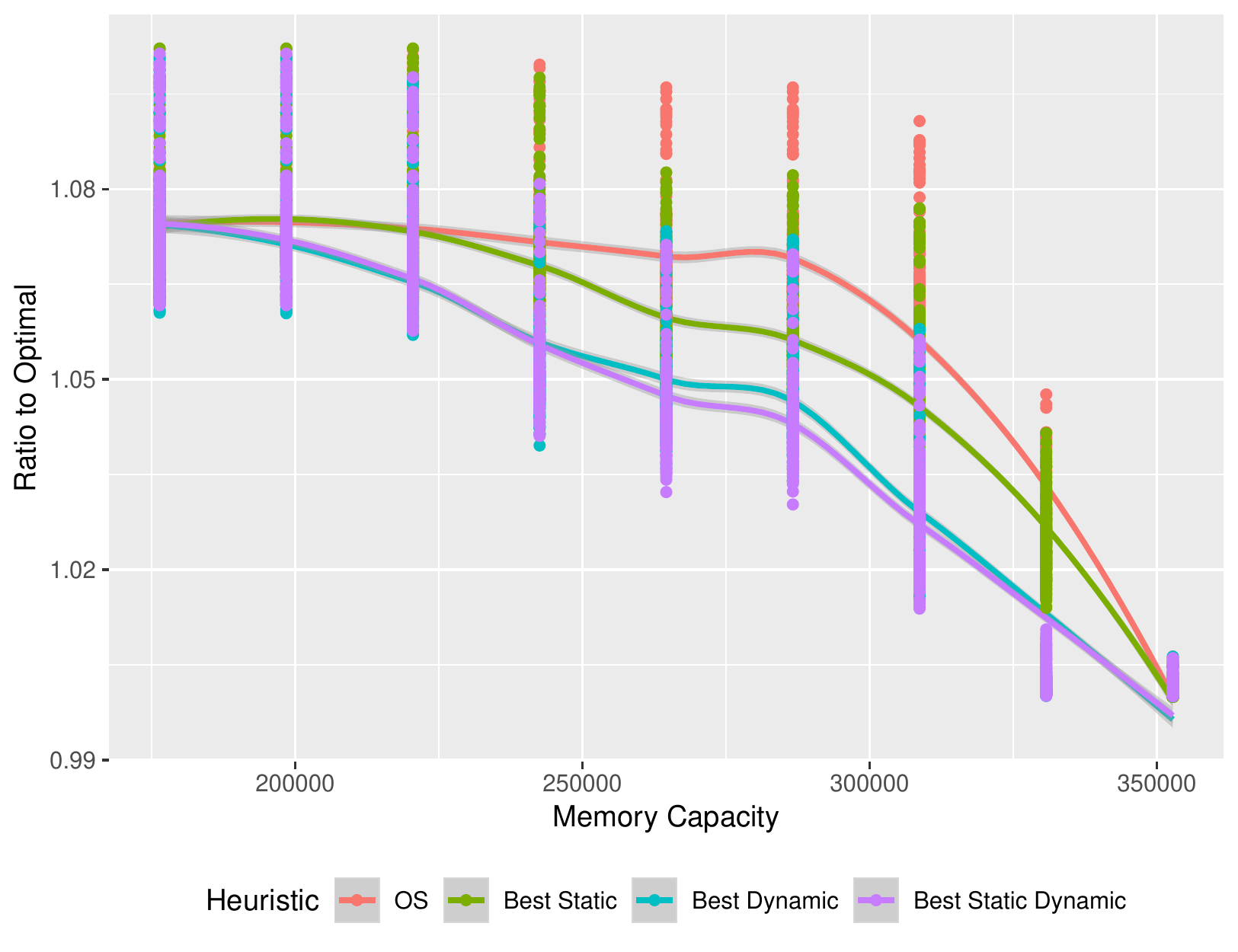}
		\caption{Comparision of best variants of all categories for HF.}
		\label{fig:ratio_to_optimal_best_hf}
	\end{center}
	\end{figure}
	
		As indicated above, HF tasks operate on less heterogeneous tiles, this is also noticeable in Figure~\ref{fig:ratio_to_optimal_hf}. All heuristics depict similar behavior for minimum memory capacity $m_c$ and increasing the memory capacity slightly does not change the performance of all heuristics. As memory capacity is increased further, dynamic variants of heuristics start performing better. For the moderate memory capacities (close to $2m_c$), static order with dynamic corrections variants outperform others. $GG$ heuristic does not achieve good performance, because its task sequence is obtained considering no extra memory is available, but is then applied in a different scenario where memory is limited. Surprisingly, the $BP$ heuristic which considers only memory constraint obtains good performance for a static heuristic, but is outperformed by more dynamic approaches.

Figure~\ref{fig:ratio_to_optimal_best_hf} shows the performance comparison of the best variant in each category, in addition to the \textit{order of submission} ($OS$) strategy which orders tasks in the (arbitrary) sequence in which they are given. Static strategies are expected to perform better when there is not any memory capacity restriction, and indeed this plot shows that static strategies face capacity bottleneck and underperform with limited memory. Dynamic strategies achieve slightly better performance with limited memory capacity, but when memory capacity is larger, static order with dynamic corrections strategies outperform all others.
		\begin{figure}[htb]
		\begin{center}
			\includegraphics[scale=0.5]{./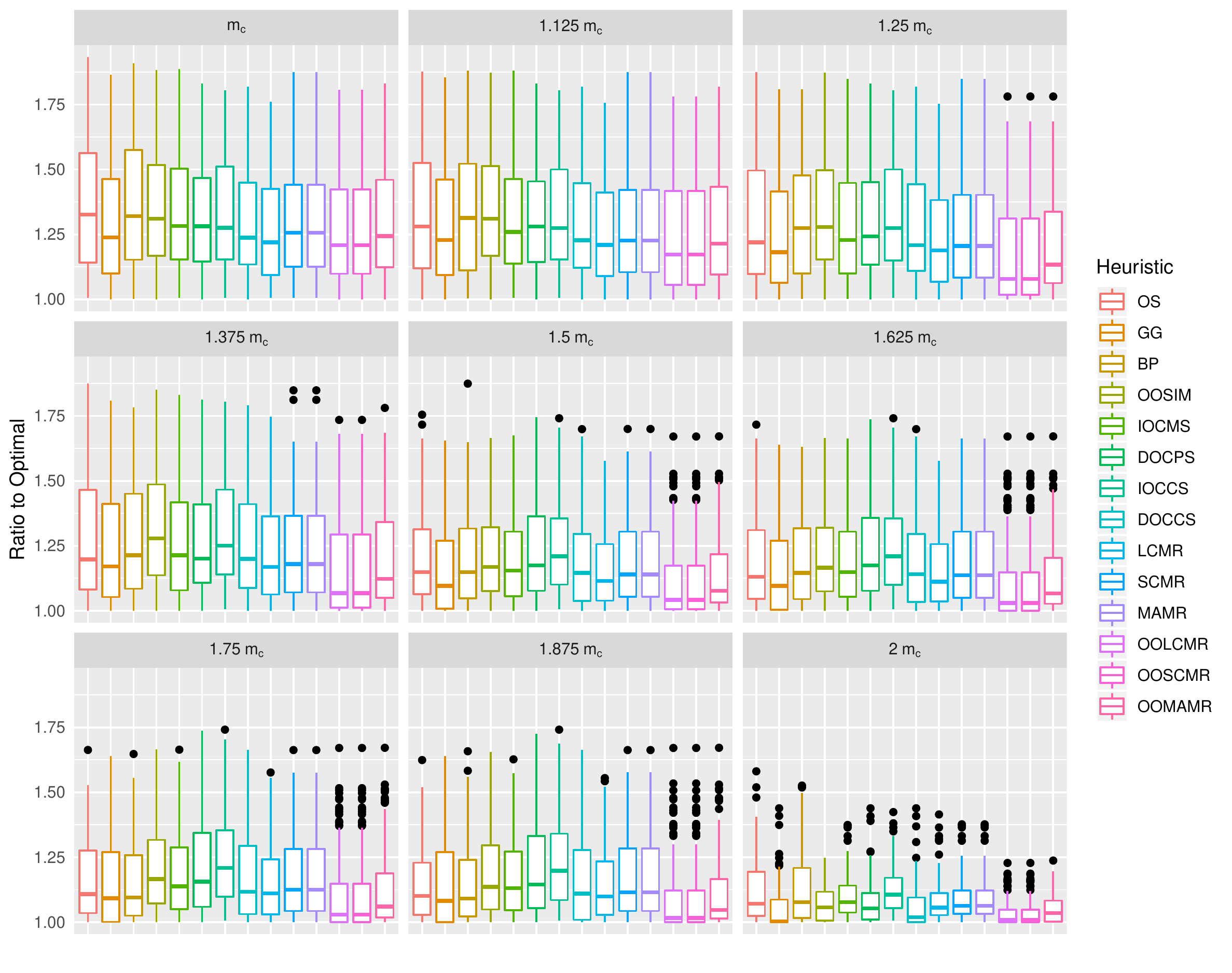}
			\caption{Comparison of different heuristics for CCSD with $m_c=1.8GB$.}
			\label{fig:ratio_to_optimal_ccsd}
	\end{center}
	\end{figure}	
	
	\subsection{CCSD Performance}

		The CCSD application operates on tasks of different sizes, hence different heuristics exhibit distinct behaviors even at minimum memory capacity $m_c$. Heterogeneity favors dynamic strategies, as can be seen by the fact that both dynamic and static order with dynamic corrections based strategies perform better than static based strategies. Similar to HF, static order with dynamic corrections based strategies outperform others as memory capacity becomes moderate.

	\begin{figure}[htb]
		\begin{center}
		\includegraphics[scale=0.7]{./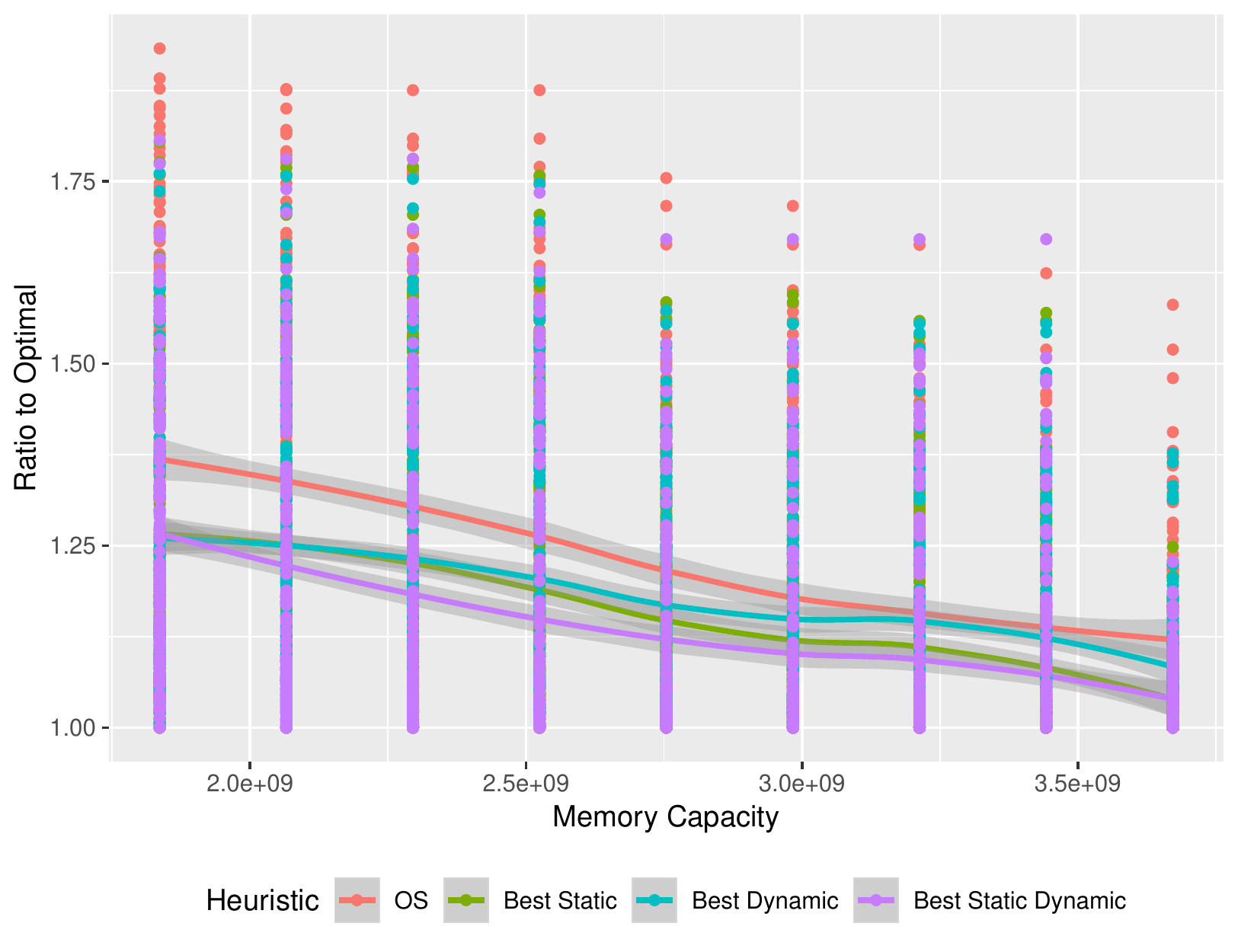}
		\caption{Comparision of best variants of all categories for CCSD.}
		\label{fig:ratio_to_optimal_best_ccsd}
		\end{center}
	\end{figure}
	
			\begin{figure}[!ht]
		\begin{center}
			\includegraphics[scale=0.65]{./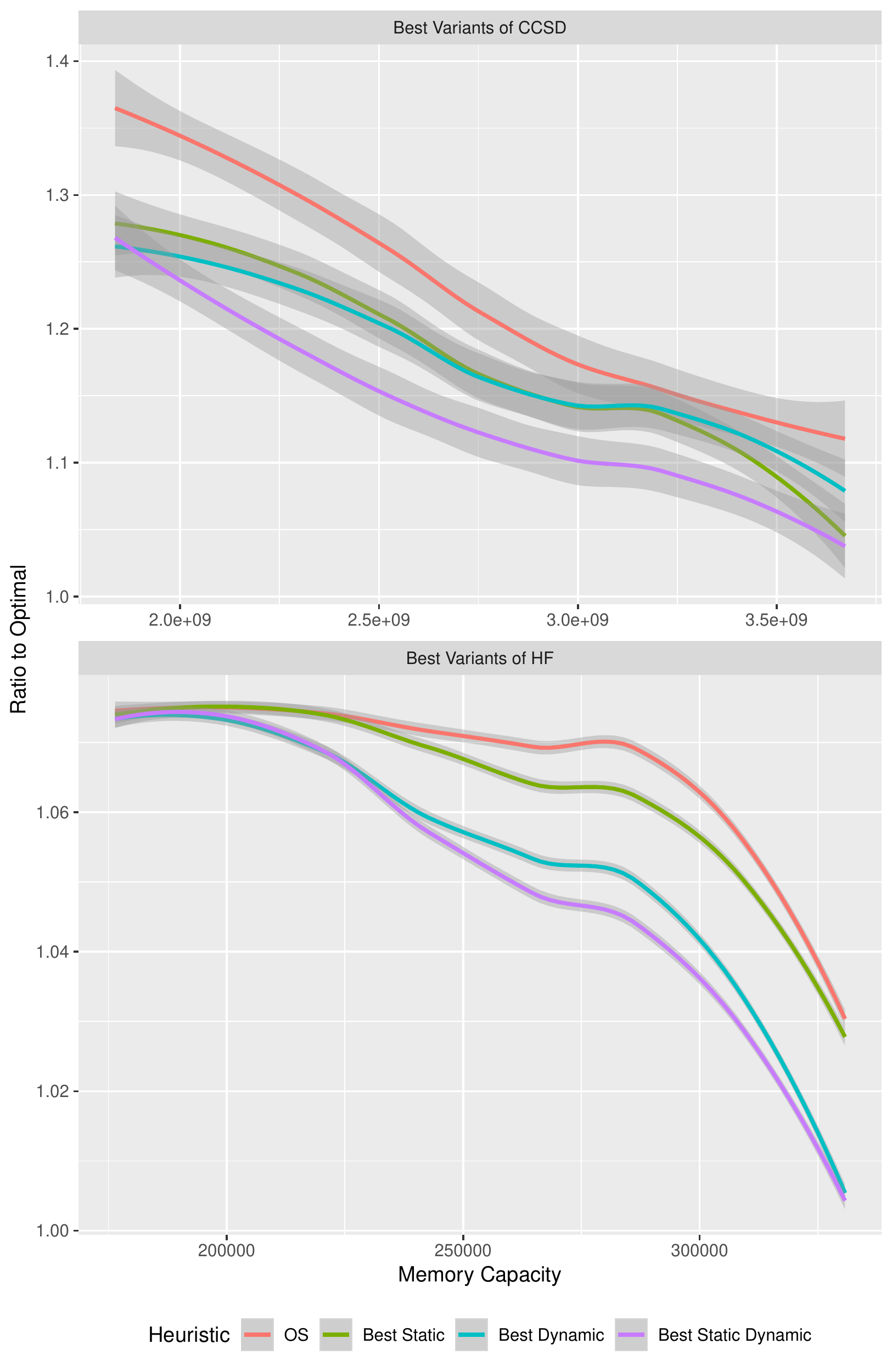}
			\caption{Best variants of all categories where heuristics are applied in the batches of 100 tasks.}
			\label{fig:best_variants_batch}
		\end{center}
		\end{figure}
	Figure~\ref{fig:ratio_to_optimal_best_ccsd} shows that best variants of dynamic and static order with dynamic corrections strategies achieve similar performance at minimum memory capacity $m_c$. But as memory capacity increases, heterogeneity allows static order with dynamic corrections based strategies to take advantage of static knowledge to get maximum overlap and dynamic correction to select another task in case of memory capacity limitation. Static strategies also start performing better at the end, which indicates that this application has potential for significant communication-computation overlap and pure dynamic strategies are unable to take this information into account while making scheduling decisions.

	\subsection{Scheduling in Batches}
		In most applications, the runtime scheduler may only observe a limited batch of independent tasks. Therefore we organize tasks of each trace file in batches of 100 (the last batch may have less than 100 tasks). We apply each heuristic on each group in succession. Figure~\ref{fig:best_variants_batch} shows the performance of the best variants of each category for both applications. The plots exhibit behavior similar to Figures~\ref{fig:ratio_to_optimal_best_hf} and~\ref{fig:ratio_to_optimal_best_ccsd}: static order with dynamic corrections variants attain maximum communication-computation overlap and outperform other heuristics.
	

	\section{Conclusion and Perspectives}
	\label{sec:conclusion}
	
	
		In this article, we consider the problem of deciding the order of data transfers between two memory nodes such that overlap of communications and computations is maximized. With Exascale computing, applications face bottlenecks due to communications. Hence, it is extremely important to achieve the maximum  communication-computation overlap in order to exploit the full potential of the system. We show that determining the order of data transfers is a NP complete problem. We propose several data transfer heuristics and evaluate them on two molecular chemistry kernels, HF and CCSD. Our results show that some of our heuristics achieve significant overlap and perform very close to the lower bound. We plan to evaluate our strategies on different applications coming from multiple domains. We also plan to study the behavior of our strategies in the context of overlapping CPU-GPU communications with computations. A runtime system aiming at exposing different heuristics to maximize the communication-computation overlap at the developer level and automatically selecting the best one is currently underway.

\bibliographystyle{IEEEtran}
\bibliography{article}

\end{document}